\newcommand{\fullversion}[1]{#1}
\newcommand{\submversion}[1]{}
\newtheorem{theorem}{Theorem}
\newtheorem{definition}[theorem]{Definition}
\newtheorem{lemma}[theorem]{Lemma}
\newtheorem{claim}[theorem]{Claim}
\newtheorem{remark}[theorem]{Remark}
\newtheorem{corollary}[theorem]{Corollary}
\crefname{claim}{claim}{claims}
\crefname{proposition}{proposition}{propositions}
\crefname{definition}{definition}{definitions}
\crefname{corollary}{corollary}{corollaries}
\crefname{conjecture}{conjecture}{conjectures}
\newcommand{\secparam}{\lambda}
\newcommand{\cp}{\mathsf{CP}}
\newcommand{\fclass}{\mathcal{F}}
\newcommand{\Sim}{\mathsf{Sim}}
\newcommand{\density}[1]{\mathcal{D}\left( \mathcal{H}_{#1} \right)}
\newcommand{\bra}[1]{\langle #1|}
\newcommand{\ket}[1]{|#1\rangle}
\newcommand{\ketbra}[2]{|#1\rangle\langle #2|}
\newcommand{\negl}{\mathsf{negl}}
\newcommand{\ignore}[1]{}
\newcommand{\cktclass}{\mathcal{C}}
\newcommand{\poly}{\mathrm{poly}}
\newcommand{\distr}{\mathcal{D}}
\newcommand{\gen}{\mathsf{Gen}}
\newcommand{\adversary}{\mathcal{A}}  % Adversary
\DeclareMathOperator*{\prob}{\mathsf{Pr}} % Probability
\newcommand{\setup}{\mathsf{Setup}}
\newcommand{\enc}{\mathsf{Enc}}
\newcommand{\dec}{\mathsf{Dec}}
\newcommand{\pk}{\mathsf{PK}}
\newcommand{\ct}{\mathsf{CT}}
\newcommand{\qfhe}{\mathsf{QFHE}}
\newcommand{\eval}{\mathsf{Eval}}
\newcommand{\sk}{\mathsf{SK}}
\newcommand{\lobf}{\mathsf{Obf}}
\newcommand{\leval}{\mathsf{ObfEval}}
\newcommand{\lockC}{\mathbf{C}}
\newcommand{\ext}{\mathsf{Ext}}
\newcommand{\tr}{\mathsf{Tr}}
\newcommand{\cE}{\mathcal{E}}
\newcommand{\cD}{\mathcal{D}}
\newcommand{\cH}{\mathcal{H}}
\newcommand{\cM}{\mathcal{M}}
\newcommand{\cC}{\mathcal{C}}
\newcommand{\cY}{\mathcal{Y}}
\newcommand{\cA}{\mathcal{A}}
\newcommand{\uniform}{\xleftarrow{\$}}
\newcommand{\given}{\mid}
\newcommand{\alice}{\mathcal{A}}
\newcommand{\bob}{\mathcal{B}}
\newcommand{\charlie}{\mathcal{C}}
\newcommand{\abc}{(\alice,\bob,\charlie)}
\newcommand{\id}{\mathbf{id}}
\newcommand{\Fatih}[1]{{\color{blue} F: #1}}
\renewcommand{\Fatih}[1]{}
\DeclareMathOperator*{\E}{\mathbb{E}}
\newcommand{\trD}[2]{T(#1, #2)}
\newcommand{\brackets}[1]{\left( #1 \right)}
\newcommand{\bracketsSquare}[1]{\left[ #1 \right]}
\newcommand{\bracketsCurly}[1]{\left\{ #1 \right\}}
\newcommand{\bracketsC}{\bracketsCurly}
\newcommand{\carom}{\text{CAROM}}
\newcommand{\pr}[1]{\prob \left[ #1 \right]}
\newcommand{\cpprime}{\widetilde{\cp}}
\newcommand{\evalprime}{\widetilde{\eval}}
\newcommand{\from}{\leftarrow}
\newcommand{\cO}{\mathcal{O}}
\newcommand{\cnc}{\mathsf{cnc}}
\newcommand{\simulator}{\mathsf{Sim}}
\renewcommand{\secparam}{\lambda}
\newcommand{\distrc}{\mathcal{D}_\cC}
\title{A Note on Copy-Protection from Random Oracles}
\author{Prabhanjan Ananth\thanks{prabhanjan@cs.ucsb.edu}\\UCSB \and Fatih Kaleoglu\thanks{kaleoglu@ucsb.edu}\\UCSB}
\date{}
\begin{document}
\maketitle
\begin{abstract}
\noindent Quantum copy-protection, introduced by Aaronson (CCC'09), uses the no-cloning principle of quantum mechanics to protect software from being illegally distributed. Constructing copy-protection has been an important problem in quantum cryptography. 
\par Since copy-protection is shown to be impossible to achieve in the plain model, we investigate the question of constructing  copy-protection for arbitrary classes of unlearnable functions in the random oracle model. We present an impossibility result that rules out a class of copy-protection schemes in the random oracle model assuming the existence of quantum fully homomorphic encryption and quantum hardness of learning with errors. En route, we prove the  impossibility of approximately correct copy-protection in the plain model. 

\end{abstract}
%\newpage
%\tableofcontents
%\newpage

\section{Introduction}

\label{sec:intro}
\noindent Quantum copy-protection, introduced by Aaronson~\cite{Aar09}, is a foundational concept in quantum cryptography. It stipulates that the no-cloning principle of quantum mechanics~\cite{Dieks82,WZ82} can be employed to protect against illegal distribution of software. In more detail, an efficient adversary, on input a copy-protected software (represented as a quantum state), cannot create two copies of software, possibly entangled with each other, such that both copies compute the same functionality as the original software. 
\par The primitive of copy-protection can be classified under the broad area of unclonable cryptography, which deals with using the no-cloning principle to design cryptographic primitives with security properties that are classically unachievable. Many interesting primitives in this category, such as quantum money~\cite{wiesner83,AC12,Zhandry19,RS19}, one-shot signatures~\cite{AGKZ20}, single-decryptor encryption \cite{GZ20,CLLZ21}, unclonable encryption~\cite{Got02,BL19}, and encryption with certifiable deletion~\cite{BI20}, can be seen as copy-protecting specific functionalities. 
\newcommand{\calO}{\mathcal{O}}
\par The focus of our work is on understanding the feasibility of constructing copy-protection for {\em all} classes of unlearnable functions. Ananth and La Placa \cite{ALP20} show that there are functions that cannot be copy-protected in the plain model. Thus, one has to rely on alternate models to construct copy-protection. 
\par In this work, we restrict our attention to the random oracle model. Interestingly, random oracles have been helpful for achieving copy-protection for specific classes of functions. Coladangelo, Majenz and Poremba~\cite{CMP20} showed the existence of  copy-protection for multi-bit output functions in the random oracle model. Ananth, Kaleoglu, Liu, Li and Zhandry~\cite{AKLLZ22} presented a new construction of copy-protection for single-bit output point functions also in the random oracle model.  However, the existence of  copy-protection for {\em all} classes of unlearnable functions in the random oracle model is still yet to be explored. 
\par Before we delve into this direction further, we first need to model the type of access the algorithms in the copy-protection scheme and the adversarial entities will have, with the random oracle. There are two types of accesses we can consider. The first one is classical access, where the interface is entirely classical: the algorithms submit a binary string $x$ to the oracle and get back $f(x)$, where $f$ is a classical random function implemented by the oracle. We will refer to this setting as \emph{classical-accessible random oracle model} ({\carom}). The second type is quantum access: the algorithms submit a query of the form $\sum_{x \in \{0,1\}^n} \alpha_x \ket{x}\ket{y_x}$ and get back $\sum_{x \in \{0,1\}^n} \alpha_x \ket{x}\ket{f(x) \oplus y_x}$. This setting was defined by Boneh, Dagdelen, Fischlin, Lehmann, Schaffner and Zhandry~\cite{BDFLSZ11} as the \emph{quantum random oracle model} (QROM).  
\par As argued in \cite{BDFLSZ11}, QROM is the preferred model over {\carom} under most circumstances. For instance, if the adversary has access to any classical code that computes the function $f$, then it can always run this code coherently, thus achieving quantum access. Nonetheless, as per the impossibility result of \cite{ALP20}, one cannot simply instantiate a generic copy-protection scheme in QROM by replacing the random function $f$ with a concrete function, e.g. a heuristically secure hash function such as SHA-512 or an obfuscated pseudorandom function. In contrast, classical-access is appropriate for alternative methods of instantiation, such as using trusted hardware or a trusted party, where classical interface can be enforced. Therefore, we argue that achieving copy-protection in {\carom} is still meaingful.

\subsection{Our Result}
We make progress towards understanding the feasibility of copy-protection in the random oracle model. It turns out that proving impossibility of copy-protection in the QROM seems quite challenging and thus, we focus on the CAROM setting. Note that QROM and CAROM are incomparable models, since QROM gives more power to both the honest algorithms and the adversary.
\par We show that copy-protection of arbitrary unlearnable functions is impossible in {\carom}. This implies that if copy-protection is possible with a random oracle, then honest algorithms must query the oracle in superposition.
\noindent 

\begin{theorem}
Assuming unbounded fully homomorphic encryption for quantum computations~\cite{mahadev2018classical,brakerski2018quantum} and quantum hardness of learning with errors (QLWE), there exists a class of unlearnable functions ${\cal F}$ such that quantum copy-protection for ${\cal F}$ is impossible in \carom.
\end{theorem}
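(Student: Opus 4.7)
The plan is to adapt the Ananth--La Placa (ALP20) plain-model impossibility to \carom\ by introducing a technique for answering classical random-oracle queries inside a \qfhe\ computation. I will take $\fclass$ to be essentially the ALP20 class, built from signatures and secure function evaluation and instantiable from QLWE; unlearnability in \carom\ will follow because the underlying hardness assumptions remain valid when the learner has oracle access to $H$. At a high level, the ALP20 attack homomorphically runs $\eval$ on an encrypted copy of the copy-protected state under \qfhe, producing encrypted outputs on challenges drawn from a carefully designed query distribution, and then extracts enough information from these ciphertexts to yield two functional copies.

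The chief new ingredient I will need is interactive handling of $\eval$'s oracle queries. In the plain model $\eval$ has no oracle and the homomorphic simulation proceeds undisturbed; in \carom\ however, $\eval$ issues classical queries to $H$ that the homomorphic evaluator cannot answer on its own. I will have the adversary intervene mid-computation: whenever the homomorphic execution of $\eval$ is about to issue a query, the adversary (who generated the \qfhe\ keys) decrypts the encrypted classical query register, queries the real oracle $H$ in the clear using its \carom\ access, re-encrypts the response into $\eval$'s answer register, and resumes the homomorphic evaluation. Because \carom\ queries are classical by definition, the corresponding \qfhe\ register sits in a computational-basis subspace, so decrypting it and swapping in a fresh ciphertext should not disturb the rest of the computation.

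En route I will also strengthen the ALP20 impossibility to rule out \emph{approximately} correct copy-protection in the plain model, because the interactive decrypt-and-re-encrypt pattern together with intrinsic \qfhe\ noise only yields approximate correctness after the reduction. This strengthening will be obtained either by amplifying $\eval$'s correctness via repetition (leveraging approximate reusability of copy-protection) or by modifying $\fclass$ so that ALP20's extractor tolerates a noticeable error rate in the encrypted output distribution. The main obstacle will be verifying that the interactive oracle handling is benign at the level of quantum semantics---that decrypting one register and inserting another mid-\qfhe-evaluation leaves the remaining ciphertext valid and still yields a correct encryption of $f(x_i)$---which will require careful use of the specific structure of the Mahadev / Brakerski \qfhe\ constructions together with the error tolerance from the strengthened plain-model impossibility.
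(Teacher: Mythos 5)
There is a genuine gap, and it sits at the heart of your one new ingredient. In the ALP20-style attack, the homomorphic evaluation of $\eval$ on the encrypted copy-protected state is performed under the key $\pk$ that is a \emph{parameter of the copy-protected function itself}: its secret key $\sk$ is hidden inside the lockable obfuscation $\cO$, and the entire point of the non-black-box extraction is that evaluating under this hidden key produces $\qfhe.\enc_{\pk}(b)$, which unlocks $\cO$ and releases $\sk\|r$. The adversary therefore does \emph{not} hold the decryption key for the ciphertexts in which $\eval$'s query registers live, so your decrypt--query--re-encrypt step cannot be carried out. (If you instead encrypt $\rho_f$ under a key you generate yourself, you can of course decrypt mid-computation, but then the output is merely an encryption of $f(x)$ under your own key---something you could obtain by running $\eval$ directly---and no classical description is extracted, so no attack results.) Your fallback of amplifying correctness by repetition also fails in the approximate setting: each $(1-\varepsilon)$-correct evaluation disturbs the state by up to $\sqrt{\varepsilon}$ in trace distance, so majority-style repetition destroys the program before it helps, and keeping multiple copies instead breaks security outright.

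The paper avoids the interactive-oracle problem entirely by reversing the order of operations: it first compiles any \carom{} scheme into a \emph{plain-model} scheme by having $\cp$ simulate the oracle on-the-fly, run a uniformly random number $S \in \{0,\dots,T-1\}$ of test executions of $\eval$ on random inputs, and ship the recorded query database (plus fresh random answers for unseen queries) along with the state; the compiled $\eval$ then answers all oracle queries internally and makes no external queries at all. A counting argument shows this compiled scheme is $(1-\varepsilon)$-mean-correct, and only then is a strengthened plain-model impossibility applied---one that tolerates approximate correctness by wrapping the ALP20 circuits inside a circuit-private $\qfhe$ evaluation so that semantic security converts mean-correctness into per-input correctness. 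This oracle-removal compilation, rather than interactive query handling inside the homomorphic evaluation, is the step your outline is missing.
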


\noindent The above result suggests that if one were to base copy-protection in the classical-accessible oracle models then the oracle needs to have some structure. We note that in both the works of~\cite{CMP20} and~\cite{AKLLZ22}, the copy-protection algorithm only makes classical queries to the oracle whereas the evaluation algorithm makes quantum queries.
\par At the heart of the above theorem is a new impossibility result for copy-protection in the plain model. Specifically, we show that even copy-protection with approximate correctness can be ruled out in the plain model (\Cref{thm:imp:approxcp}), while prior works~\cite{ALP20,ABDS21} only ruled out copy-protection with statistical correctness.

\par We combine this impossibility result with a generic transformation from a copy-protection scheme in {\carom} to an approximate copy-protection scheme in the plain model (\Cref{lem:romtoplain}) to rule out copy-protection of arbitrary unlearnable functions in {\carom} (\Cref{thm:imp_main}). Our transformation crucially relies on recording the oracle queries of the honest algorithms and efficient simulation of a random oracle consistent with a prerecorded database. Recording quantum queries to a random oracle can be achieved using Zhandry's compressed oracle technique \cite{Zha19_compressed}, but to our knowledge it is not known how to efficiently simulate a quantum random oracle consistent with a prerecorded database. Therefore, any construction in QROM where honest parties make superposition queries to the oracle would circumvent our impossibility result.

\subsection{Related Work}
\label{sec:relatedwork} Aaronson~\cite{Aar09} was the first to study copy-protection in the oracle models. Their construction relied upon oracles implementing quantum functionalities. Recently, Aaronson, Liu, Liu, Zhandry and Zhang~\cite{ALLZZ20} constructed copy-protection in a quantum-accessible oracle model. As mentioned earlier, two works~\cite{CMP20,AKLLZ22} present constructions of copy-protection for point functions in the quantum random oracle model. 
\par In a recent exciting work, Coladangelo, Liu, Liu and Zhandry~\cite{CLLZ21} proposed the first construction of copy-protection for a restricted class of functions, namely pseudorandom functions, in the plain model, based on post-quantum indistinguishability obfuscation and post-quantum one-way functions. Another work by Ananth and Kaleoglu~\cite{AK21} present a construction of approximately correct copy-protection for a class of point functions from post-quantum one-way functions. Some recent works~\cite{ALP20,ALLZZ20,KNY20,broadbent2021secure} present constructions of weaker notions of copy-protection. 

\section{Overview of Techniques}
\label{sec:overview:impossibility}
\noindent In this section, we explain the main ideas behind our result. Inspired by the ideas developed in the program obfuscation literature~\cite{BV16,CKP15}, we design a two step approach to proving this impossibility result. 
\begin{itemize}
    \item {\em Ruling out approximate copy-protection in the plain model}: In the first step, we rule out a notion of copy-protection where correctness is only guaranteed for a large constant fraction of inputs.
    \item {\em From CP using oracles to approximate CP in the plain model}: In the second step, we show that copy-protection in the classical-accessible random oracle model implies approximate copy-protection in the plain model. In other words, given any copy-protection in the classical-accessible random oracle model, we can generically get rid of the random oracle model at the cost of weakening the correctness guarantee. 
\end{itemize}

\paragraph{Ruling out Approximate Copy-Protection.} We show that copy-protection, where correctness is guaranteed for a fraction of inputs, say $(1- \varepsilon)$, is impossible to achieve in the plain model. This strengthens the previous result~\cite{ALP20} which ruled out copy-protection with correctness negligibly close to 1. 
\par Our goal is to transform this scheme, call it $CP$, such that in the transformed scheme, call it $CP'$, on every input, the evaluation of the copy-protected state is correct with probability $(1- \varepsilon)$ (i.e., per-input $(1- \varepsilon)$-correctness); assume for now, that $C$ is a boolean circuit with 1-bit output. The scheme $CP'$ is designed as follows: to copy-protect $C$, compute a copy-protection of a circuit $G$, with respect to the scheme $CP$, such that $G$ takes as input an encryption of $x$, homomorphically computes $C$ on encryption of $x$ and outputs the result. The output is copy-protection of $G$ along with the public key-secret key pair of the encryption scheme. During the evaluation process of $CP'$, first encrypt the input $x$, run the $CP$ copy-protection of $G$ to obtain encryption of $C(x)$ and finally, decrypt the answer to obtain $C(x)$ in the clear.
\par This idea was notably developed by Bitansky and Vaikuntanathan~\cite{BV16} in the context of indistinguishability obfuscation. To make this idea work, we would need fully homomorphic encryption schemes for quantum computations (QFHE). Moreover, we require the QFHE scheme to satisfy circuit privacy; that is, the homomorphically evaluated ciphertext does not leak information about the circuit being used during evaluation. This property is necessary because the evaluator who obtains a classical description of the function can trivially break the copy-protection security.
Fortunately, a recent work by Chardouvelis, D\"ottling, and Malavolta~\cite{CDM20} demonstrates the existence of a QFHE scheme satisfying the circuit privacy property we need\footnote{We note that the definition considered in~\cite{CDM20} is weak: for example, they do not even handle adversaries who can entangle the quantum messages with their private state. We provide a stronger definition in this paper and remark that the construction provided in~\cite{CDM20} already satisfies this stronger definition.}. \\

\noindent {\em Failure of Majority Argument.} Suppose we manage to reduce the feasibility of copy-protection with correctness over a $(1- \varepsilon)$-fraction of inputs to per-input $(1- \varepsilon)$-correctness. The next step would be to rule out copy-protection with per-input correctness. A natural attempt would be to consider multiple copies of copy-protection and take a majority vote; this would improve the correctness to be close to 1. While this argument would work for program obfuscation, this unfortunately fails in the context of copy-protection. The reason is simple: once you have many copies of copy-protection, an adversary can now distribute each copy to a different individual, thus breaking the security of copy-protection.
\par Thus, we need to figure out an alternate method to rule out quantum copy-protection with per-input correctness guarantees. As done in prior works, we rely upon non-black box techniques to rule out copy-protection. We need to be especially careful when invoking non-black box arguments in the per-input $(1- \varepsilon)$-correctness setting, as every evaluation of the copy-protected state significantly degrades the correctness guarantee of the original state. If $\varepsilon$ is the correctness error then after $k$ evaluations, the trace distance between the original state and the new state is $k\sqrt{ \varepsilon}$ (by quantum union bound). If $k \geq \frac{1}{\sqrt{\varepsilon}}$ then the new state is useless. Luckily, the quantum non-black box technique of~\cite{ALP20} involves the attacker only making two evaluations of the copy-protected state. 

In the technical sections, we combine the self-reducibility technique (presented above) with the non-black box technique of~\cite{ALP20} to rule out copy-protection for $(1- \varepsilon)$-fraction of inputs. 

\paragraph{From CP using oracles to approximate CP.} The next step would be to rule out copy-protection in the classical-accessible random oracle model. We rely upon ``de-oracle-izing" techniques developed in the context of obfuscation \cite{CKP15}; the idea is to remove the use of random oracle in the construction at the cost of weakening the correctness guarantee. While the overall template is inspired from~\cite{CKP15}, the actual construction is different. 
\par In more detail, given a copy-protection scheme $(\cp, \eval)$ in the classical-accessible random oracle model, where $\cp$ represents the copy-protection algorithm and $\eval$ represents the evaluation algorithm, define a copy-protection scheme $(\cpprime, \evalprime)$ in the plain model as follows:
\begin{enumerate}
    \item Using a random oracle $\mathcal{O}$ simulated \textit{on-the-fly}, $\cpprime(f)$ runs $\cp^\mathcal{O}(f)$ to obtain a copy-protected program $\rho_f$. It then runs a random (polynomial) number of test executions $\eval^\mathcal{O}(\rho_f, x_i)$ for randomly chosen inputs $x_i$. It records all the queries made by $\eval$ during the test executions in a database $D$. Finally, it samples a set of random oracle answers $R$ and outputs $(\rho, D, R)$.
    
    \item $\evalprime((\rho_f, D, R), x)$ simulates a random oracle $\mathcal{O'}$ using the database $D$ and using $R$ for queries not recorded in $D$. It runs $\eval^\mathcal{O'}(\rho_f, x)$ and outputs the answer. 
\end{enumerate}

\noindent The key difference between our construction and the construction of \cite{CKP15} is that we choose the number of test executions at random. In the classical world, running more test executions can never hurt the simulation. In the quantum world, on the other hand, every test execution can significantly alter the state $\rho_f$ by performing measurements. This could be true even if $\rho_f$ is reusable, in the sense that its correctness guarantee is preserved after polynomially many evaluations\footnote{These \emph{forced} measurements are a unique feature of the classical-accessible oracle setting, since quantum queries need not be measured at the time of the query by the deferred measurement principle.}. For instance, the state $\rho_f$ could maintain a counter which affects the oracle queries made by $\eval$. To ensure that the oracle queries $\evalprime$ needs to answer are captured in the database $D$ with probability close to 1, even if $\rho_f$ keeps changing, we choose the number of test queries at random and sufficiently large. 
\par Perhaps surprisingly, our technique also  improves the classical impossibility of obfuscation in the random oracle model! Specifically, it reduces the number of test executions needed by a factor of $N$ when compared to \cite{CKP15}, where $N$ is a query-bound for $\eval$. 

\subsection{Acknowledgements}
We thank Luowen Qian for valuable discussions.

\subsection{Organization} \label{sec:org}
Preliminaries are described in~\Cref{sec:prelims}. The impossibility of approximate copy-protection in the plain model is presented in~\Cref{sec:impossibilityplain}. We prove the impossibility of copy-protection in CAROM in~\Cref{sec:imp:carom}.  
\newcommand{\pirateexp}[4]{\mathsf{PirExp}^{#1,#2}_{#3,#4}}
\newcommand{\distrclass}{\mathfrak{D}_X}
\newcommand{\puzzlegame}{ \mathsf{SOLVE2}_{\mathcal{T}, \mathcal{Z}} }
\newcommand{\genO}{\mathsf{GenOracle}}

\section{Preliminaries}
\label{sec:prelims}

\subsection{Notation}
\label{sec:prelims_notation}
We denote by $x \uniform X$ the sampling of an element $x$ from the uniform distribution over $X$. We denote by $\secparam$ the security parameter. We use the terms \textit{function} and \textit{circuit} interchangeably. We denote by $\negl(\cdot)$ a generic negligible function. $[N]:= \{1,2,\dots,N\}$. If an algorithm $\alice$ has oracle access to $\mathcal{O}$, we may write $\alice^{\mathcal{O}}$ to emphasize this fact. We call a function $\epsilon: \mathbb{Z}^+ \to \mathbb{R}^+ \cup \{0\}$ \textit{noticable} if there exists a polynomial $p$ such that $\epsilon(n) > \frac{1}{p(n)}$ for sufficiently large $n$. We say \emph{overwhelming} probability to mean probability $1 - \negl(\secparam)$. We sometimes shorten a string of zeros to 0, when the length will be understood from the context. We write $\alice(x;r)$ when a classical algorithm $\alice$ is run on input $x$ using specified randomness $r$.

\subsection{Quantum Computing Basics}
\label{sec:app:prelims_quantumcomputing}
For a finite set (register) $X$ we denote by $\cH_X$ the Hilbert Space generated by the basis $\left\{ \ket{x} \right\}_{x \in X}$. We denote by $\cD(\cH)$ the set of valid mixed quantum states over Hilbert space $\cH$, i.e. linear, positive-semidefinite operators over $\cH$ with unit trace, also known as density operators. We define a quantum polynomial time (QPT) algorithm as a family of generalized quantum circuits $\bracketsC{\alice_\secparam}_{\secparam \in \mathbb{N}}$ such that each $\alice_\secparam$ contains at most $p(\secparam)$ input/output qubits (including auxiliary qubits) and at most $p(\secparam)$ gates from a universal gate set (such as $\bracketsC{CNOT, H, T}$), for some polynomial $p(\cdot)$.

\subsubsection{Trace Distance}
\label{sec:trd}
A common way to measure dissimilarity between two quantum states $\rho, \sigma \in \cD(\cH)$ is the (normalized) trace distance, defined as $$\trD{\rho}{\sigma} := \frac{1}{2}|\rho - \sigma |_{tr} = \frac{1}{2}\tr\left(\sqrt{(\rho - \sigma)^2}\right).$$
It is a fact that no quantum algorithm can increase the trace distance between two quantum states, i.e. $\trD{\alice(\rho)}{\alice(\sigma)} \le \trD{\rho}{\sigma}$ for any $\alice, \rho, \sigma$. When two quantum states $\rho(\secparam), \sigma(\secparam)$ depend on a security parameter and satisfy $\trD{\rho}{\sigma} \le \negl(\secparam)$, we write $\rho \approx_s \sigma$, meaning $\rho$ and $\sigma$ are \textit{statistically close}.

\subsubsection{Measurement}
\label{sec:measurement}
A quantum measurement on a Hilbert Space $\cH$ with a finite set of outcomes $[m]$ can be represented by a set of operators $(M_i)_{i \in [m]}$ satisfying $\sum_{i \in [m]} M_i^\dagger M_i = I$. Given a quantum state $\rho \in \cD(\cH)$, the probability of outcome $i$ is given by $\tr(M_i\rho M_i^\dagger)$, whereas the post-measurement state after measuring outcome $i$ is given by $\frac{M_i \rho M_i^\dagger}{\tr(M_i\rho M_i^\dagger)}$. If the measurement outcome is not revealed, then the post-measurement state equals the mixture $\sum_{i \in [m]} M_i\rho M_i^\dagger$.

\subsubsection{Almost As Good As New Lemma}
\label{sec:aagan}
The following lemma from \cite{aaronson2004limitations} is widely used in literature\footnote{In quantum information theory, it is known as \textit{the gentle measurement lemma} \cite{Winter_1999}.}.
\begin{lemma}[Almost as Good as New Lemma]
\label{lem:aagan}
Let $\rho$ be a quantum state and $(M_0, M_1)$ be a $2$-outcome measurement such that $\tr(M_0\rho M_0^\dagger) \ge 1 - \varepsilon$. Then, after this measurement is performed on $\rho$, it is possible to recover a state $\rho'$ such that $\trD{\rho}{\rho'} \le \sqrt{\varepsilon}$. In addition, the recovery procedure is independent of $\rho$, and is efficient (runs in polynomial time) given that $(M_0,M_1)$ is efficient.
\end{lemma}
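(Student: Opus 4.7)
The plan is to reduce the general two-outcome measurement to a projective measurement on an enlarged Hilbert space via the standard unitary dilation, and then to recover by ``uncomputing''. Since $M_0^\dagger M_0 + M_1^\dagger M_1 = I$, there exists a unitary $U$ on $\cH \otimes \cH_{anc}$ (with $\cH_{anc}$ a single-qubit ancilla) satisfying $U(\ket{\phi}\ket{0}) = (M_0\ket{\phi})\ket{0} + (M_1\ket{\phi})\ket{1}$ for every $\ket{\phi}$, so that applying $U$ and then measuring the ancilla in the computational basis exactly implements the measurement $(M_0,M_1)$. The recovery procedure I propose is: after the ancilla has been measured, apply $U^\dagger$ to the primary register and the ancilla jointly, then discard the ancilla. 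This recipe is oblivious to $\rho$ and is efficient whenever $M_0, M_1$ are.

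For the analysis, I would first reduce to pure states by taking a purification $\ket{\psi}_{AB}$ of $\rho$; the recovery acts trivially on the purifying register $B$, and trace distance is non-increasing under partial trace. Setting $\ket{\Phi} := (U \otimes I_B)(\ket{\psi}_{AB}\ket{0}_{anc})$, one can write
\[
\ket{\Phi} = \alpha_0 \ket{\phi_0}_{AB}\ket{0} + \alpha_1 \ket{\phi_1}_{AB}\ket{1}, \qquad |\alpha_0|^2 = p_0 \ge 1 - \varepsilon.
\]
Performing the measurement without revealing the outcome leaves the dephased state $\tau := \sum_i |\alpha_i|^2 \ketbra{\phi_i}{\phi_i}_{AB} \otimes \ketbra{i}{i}_{anc}$, while the ideal ``uncomputed'' target is the pure state $\ketbra{\Phi}{\Phi}$ (since $U^\dagger\ket{\Phi} = \ket{\psi}\ket{0}$). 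Because $U^\dagger$ preserves trace distance and partial trace is non-expansive, it suffices to bound $\trD{\ketbra{\Phi}{\Phi}}{\tau}$.

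This last bound is a short computation: the difference $\ketbra{\Phi}{\Phi} - \tau$ is supported on the two-dimensional span of $\ket{\phi_0}_{AB}\ket{0}$ and $\ket{\phi_1}_{AB}\ket{1}$ (orthonormal thanks to the ancilla label) and has only off-diagonal entries $\alpha_0\alpha_1^*$ in that basis, giving eigenvalues $\pm|\alpha_0\alpha_1|$ and hence trace distance exactly $\sqrt{p_0 p_1} \le \sqrt{p_0(1 - p_0)} \le \sqrt{\varepsilon}$. Combining with the reductions above yields $\trD{\rho}{\rho'} \le \sqrt{\varepsilon}$ as required.

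I expect the main obstacle to be purely bookkeeping around the dilated registers---making precise that the ``recovery'' operates \emph{after} the measurement (reusing the measurement ancilla rather than replacing the measurement by a coherent copy), and verifying that switching to a purification of $\rho$ loses nothing. Once the picture is set up, the $\sqrt{\varepsilon}$ bound drops out without slack, and the efficiency claim is immediate from the fact that the Stinespring-style dilation $U$ can be built from $M_0, M_1$ in polynomial time.
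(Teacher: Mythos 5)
Your argument is correct and complete: the dilation $U$, the reduction to a purification, and the computation that the off-diagonal block of $\ketbra{\Phi}{\Phi}-\tau$ has trace distance $\sqrt{p_0p_1}\le\sqrt{\varepsilon}$ are all sound, and this is precisely the standard proof of the cited result of Aaronson (the paper itself states \Cref{lem:aagan} by reference to the literature rather than proving it). No gaps.
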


\begin{corollary}
\label{cor:aagan}
Let $\alice$ be a quantum algorithm that given as input a quantum state $\rho$ and a classical string $x$, outputs a classical string $y$. Suppose that \begin{align}
\label{eq:correctness}    \pr{f(x) \leftarrow \alice(\rho, x)} \ge 1 - \varepsilon(x),
\end{align}
    where $\rho$ is a quantum input state, $x$ is a classical input, and $f(x)$ is a classical deterministic function. Then, there exists a quantum algorithm $\alice'$ such that (1) $\alice'(\rho, x)$ outputs a classical string $y$ identically distributed to output of $\alice(\rho, x)$, and (2) in addition $\alice'$ outputs a residual state $\rho'$ satisfying $\trD{\rho}{\rho'} \le \sqrt{\varepsilon(x)}$. Moreover, $\alice'$ is efficient given that $\alice$ is efficient.
\end{corollary}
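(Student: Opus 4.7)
The plan is to derive the corollary from the Almost As Good As New Lemma (\Cref{lem:aagan}) via a purification and uncompute argument. First, I would purify $\alice$ so that on input $(\rho, x)$, it acts as a unitary $U_x$ on $\rho$ together with a polynomial number of ancilla qubits initialized to $\ket{0}$, followed by a computational-basis measurement of a designated output register within the ancilla. Any QPT algorithm producing classical output can be cast in this form with only polynomial overhead, so this incurs no essential loss.

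Given this purification, I would define $\alice'(\rho, x)$ as follows: apply $U_x$ to $\rho \otimes \ket{0}\bra{0}$; coherently copy the output register into a fresh register $Y$ via CNOT gates; apply $U_x^\dagger$ to the (input, ancilla) registers; finally, measure $Y$ in the computational basis to obtain $y$. The classical output is $y$, and the residual state $\rho'$ is obtained by tracing out the ancilla and $Y$. For property~(1), note that by the principle of deferred measurement, measuring $Y$ at the end is equivalent to measuring the output register immediately after $U_x$; this coincides with the measurement performed by $\alice$, so the distribution of $y$ matches that of $\alice(\rho, x)$.

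For property~(2), let $\sigma := U_x(\rho \otimes \ket{0}\bra{0}) U_x^\dagger$ and consider the two-outcome projective measurement $(\Pi_{f(x)}, I - \Pi_{f(x)})$ on the output register, where $\Pi_{f(x)}$ is the projector onto $\ket{f(x)}$. By hypothesis, its first outcome has probability at least $1 - \varepsilon(x)$. Applying \Cref{lem:aagan} yields a post-measurement state $\tilde{\sigma}$ with $\trD{\tilde{\sigma}}{\sigma} \leq \sqrt{\varepsilon(x)}$. Since $U_x^\dagger$ is unitary (trace distance preserving) and the partial trace is trace-distance non-increasing, the reduced state on the input register after applying $U_x^\dagger$ and tracing out the ancilla is within $\sqrt{\varepsilon(x)}$ of $\rho$, as required.

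The main point needing care is that, in order to output a specific $y$, the construction effectively performs the finer computational-basis measurement of the output register rather than just the 2-outcome measurement handled by \Cref{lem:aagan}. I would handle this by factoring the finer measurement as the 2-outcome measurement followed by an additional basis measurement restricted to the reject subspace $\mathrm{range}(I - \Pi_{f(x)})$, which carries total probability at most $\varepsilon(x)$; the extra disturbance introduced by this refinement is therefore of order $\varepsilon(x)$ and can be absorbed into the $\sqrt{\varepsilon(x)}$ bound. Efficiency of $\alice'$ is immediate, since purifying $\alice$, performing CNOT copies, and running $U_x^\dagger$ all incur only polynomial overhead over $\alice$.
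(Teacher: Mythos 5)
Your proposal is correct and follows essentially the same route as the paper's (two-line) proof: purify $\alice$ via deferred measurement and apply \Cref{lem:aagan} to the two-outcome test ``$y=f(x)$'', with the uncompute step realizing the recovery. In fact you are more careful than the paper in flagging that outputting $y$ requires the full computational-basis measurement rather than just the two-outcome one; the only quibble is that the resulting bound is honestly $\sqrt{\varepsilon(x)}+\varepsilon(x)$ (equivalently $O(\sqrt{\varepsilon(x)})$) rather than literally $\sqrt{\varepsilon(x)}$, a constant-factor slack that is immaterial to every use of the corollary in the paper.
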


\begin{proof}[Sketch]
By deferred measurement principle, we can modify $\alice$ so that it applies a unitary transformation followed by measuring and outputting a value $y$. Fix $x \in X$, then the statement follows after applying \Cref{lem:aagan} with respect to the 2-outcome measurement which checks whether $y = f(x)$ or not.
\end{proof}

\subsection{Quantum Fully Homomorphic Encryption}
\label{sec:prelims_qfhe}
A fully homomorphic encryption scheme allows for publicly evaluating an encryption of $x$ using a function $f$ to obtain an encryption of $f(x)$. Traditionally $f$ has been modeled as classical circuits but in this work, we consider the setting when $f$ is modeled as quantum circuits and when the messages are quantum states. This notion is referred to as quantum fully homomorphic encryption (QFHE). We state our definition borrowed directly from\footnote{With the slight modification that the evaluation key is included as part of the public key.} \cite{broadbent2015quantum}.

\begin{definition} Let $\cM$ be the Hilbert space associated with the message space (plaintexts) and $\cC$ be the Hilbert space associated with the ciphertexts. A quantum fully homomorphic encryption (QFHE) scheme is a tuple of QPT algorithms  $\qfhe=(\gen,\enc,\dec,\allowbreak \eval)$:
\begin{itemize}
    \item $\qfhe.\gen(1^\secparam)$: Takes as input a security parameter $\secparam$ in unary; outputs a classical public-secret key pair, $(\pk,\sk)$.
    \item $\qfhe.\enc(\pk,\cdot):\cD(\cM)\rightarrow \cD(\cC)$: Takes as input a public key $\pk$ and a quantum message $\rho$; outputs a quantum ciphertext $\sigma$.
    \item $\qfhe.\dec(\sk,\cdot):\cD(\cC)\rightarrow \cD(\cM)$: Takes as input a secret key $\sk$ and a quantum ciphertext $\sigma$; outputs a quantum message $\rho$.
    \item $\qfhe.\eval(\pk, \cE, \cdot ):\cD(\cC^{\otimes n})\rightarrow \cD(\cC^{\otimes m})$: Takes as input a public key $\pk$, description of a quantum circuit $\cE: \cD(\cM^{\otimes n}) \rightarrow \cD(\cM^{\otimes m})$, and a tuple of quantum ciphertexts $\sigma \in (\cD(\cC^{\otimes n}))$; outputs a tuple of quantum ciphertexts $\sigma' \in \cD(\cC^{\otimes m})$.

\end{itemize}
\end{definition}
\noindent Semantic security and compactness are defined analogously to the classical setting, and we defer to~\cite{broadbent2015quantum} for a definition.
\noindent For the impossibility result, we require a $\qfhe$ scheme where ciphertexts of classical plaintexts are also classical. Given any classical message $x \in \{0,1\}^k$, we want $\qfhe.\enc(\pk, \ket{x}\bra{x})$ to be a computational basis state $\ket{z}\bra{z}$ for some $z \in \{0,1\}^l$ (here, $l$ is the length of ciphertexts for $k$-qubit messages). In this case, we write $\qfhe.\enc_\pk(x)$. We also want the same to be true for evaluated ciphertexts, i.e. if $\cE(\ket{x}\bra{x})=\ket{y}\bra{y}$ for some $x \in \{0,1\}^n$ and $y \in \{0,1\}^m$, then 
$$\qfhe.\enc_\pk(y) \leftarrow \qfhe.\eval(\pk, \cE, \qfhe.\enc_\pk(x)) $$
is a classical ciphertext of $y$.

\subsubsection{Circuit Privacy}
\label{sec:prelims_qfhe_circuitprivacy}
An additional property we need a QFHE scheme to satisfy is called \textit{malicious circuit privacy}. Informally, it states that the evaluation algorithm does not leak any information about the circuit being evaluated, i.e. given a homomorphic evaluation of a circuit $\cE$, an adversary cannot learn any information she would not have learned given only black-box access to $\cE$. We give the formal definition below, adapted from\footnote{The difference is that we sensibly strengthen the definition to account for adversaries with a possibly entangled auxiliary register $B$. Even though this was not formally considered by \cite{CDM20}, a quick analysis of their argument reveals that their construction indeed satisfies this stronger definition of circuit privacy.} \cite{CDM20}:
\begin{definition}[Malicious Circuit Privacy]
\label{def:qfhe_circuitprivacy}
A QFHE scheme $\qfhe = (\qfhe.\gen, \allowbreak \qfhe.\enc, \allowbreak \qfhe.\dec, \allowbreak \qfhe.\eval)$ satisfies \textbf{malicious circuit privacy} if there exist unbounded quantum algorithms $\qfhe.\ext$ and $\qfhe.\Sim$ such that for any (possibly invalid) public key $\pk^*$, any quantum circuit $\cE: \cD(\cM^{\otimes n}) \rightarrow \cD(\cM^{\otimes m})$, any ancillary register $A$ with $\cH_A = \cM^{\otimes \poly(\secparam)}$, and any $\rho\in \cD(\cC^{\otimes n} \otimes \cH_A)$ we have $$ \brackets{ \qfhe.\eval(\pk^*, \cE, \cdot) \otimes \id_A} \rho \approx_s \qfhe.\Sim(1^\secparam, (\cE \otimes \id_{AB})(\widetilde{\rho})),  $$
where $\widetilde{\rho} \in \cD(\cC^{\otimes n} \otimes \cH_A \otimes \cH_B)$ is obtained by applying $\qfhe.\ext(1^\secparam, \pk^*, \cdot)$ to the $\cC^{\otimes n}$ (ciphertext) register of $\rho$, and $B$ is an ancillary register.
\end{definition}
\par The definition can be alternatively stated as follows: for an unbounded adversary, one query $\qfhe.\eval$ evaluation of the circuit $\cE$ is no more powerful than one query to $\cE$ itself. Note also that $A$ is a hidden register held by an adversary that could be entangled with the ciphertext, so it is not accessible by the algorithms $\qfhe.\eval$ and $\qfhe.\ext$.

\paragraph{Instantiation.} Malicious Circuit Privacy can be achieved using a regular QFHE scheme \cite{CDM20}. We state their result as a theorem: \begin{theorem}
Assuming QLWE and circular security\footnote{This assumption is needed to evaluate circuits with unbounded depth in both classical and quantum FHE.}, there exists a QFHE scheme with malicious circuit privacy, which reduces to a classical FHE scheme for classical inputs.
\end{theorem}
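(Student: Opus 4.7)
The plan is to instantiate the construction of Chardouvelis, D\"ottling, and Malavolta~\cite{CDM20} (which is built on top of the QLWE and circularly secure QFHE of~\cite{mahadev2018classical,brakerski2018quantum}) and verify the two features demanded by the theorem: (i) malicious circuit privacy in the sense of \Cref{def:qfhe_circuitprivacy}, which sensibly strengthens the CDM20 notion by letting the adversary keep an auxiliary register $A$ entangled with the input ciphertext, and (ii) that encryptions and evaluated encryptions of classical plaintexts are themselves classical.

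For (i), I would first recall the CDM20 extractor/simulator pair. Their $\qfhe.\ext$ performs an unbounded decryption on the ciphertext register under the (possibly malformed) $\pk^*$, and their $\qfhe.\Sim$ re-encrypts the plaintext output of $\cE$ through a sanitization subroutine that blinds any $\cE$-dependent information using local fresh randomness. Crucially, both are quantum channels that act \emph{only} on the ciphertext register, and the CDM20 statistical closeness bound is proved at the channel level, uniformly over all input density operators on $\cC^{\otimes n}$. Once stated in this channel form, the extension to an entangled auxiliary register $A$ is a standard purification argument: for any two CPTP maps $\Phi, \Psi$ on $\cC^{\otimes n}$ with $\trD{\Phi(\tau)}{\Psi(\tau)} \le \negl(\secparam)$ uniformly in $\tau \in \cD(\cC^{\otimes n})$, monotonicity of trace distance under the identity channel on $A$ yields $\trD{(\Phi\otimes \id_A)\rho}{(\Psi\otimes \id_A)\rho} \le \negl(\secparam)$ for every $\rho \in \cD(\cC^{\otimes n}\otimes \cH_A)$. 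Applying this observation to the CDM20 bound lifts it verbatim to the definition used in the paper.

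For (ii), I would observe that the Mahadev/Brakerski--Shmueli base scheme encrypts a classical bit by producing a classical LWE-type sample with the bit masked into it, so classical plaintexts already yield classical ciphertexts; the same is true after homomorphic evaluation of a classical circuit, since every gate gadget acts as a classical permutation on classical codewords. The CDM20 sanitization layer then applies a classical one-time-pad style mask (under the encryption) together with rerandomization of the LWE noise, both of which preserve classicality of the ciphertext, so the resulting scheme collapses to a classical FHE scheme on the classical input/output slice.

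The main obstacle will be verifying that the sanitization step of~\cite{CDM20} is genuinely oblivious to the auxiliary register $A$; a priori, their original statement only requires $\rho \in \cD(\cC^{\otimes n})$, and one could imagine a step that coherently couples the ciphertext to the adversary's side register. A careful inspection of their construction confirms that sanitization is composed of operations on the ciphertext register and local fresh ancillas only, so the full evaluation-and-sanitization pipeline is a CPTP map on $\cC^{\otimes n}$ alone, after which the purification argument above goes through without any modification to $\ext$ or $\Sim$. Once this sanity check is done, the theorem follows by putting (i) and (ii) together.
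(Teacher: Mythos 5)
The paper itself does not prove this theorem: it imports the construction from \cite{CDM20} and, in a footnote to \Cref{def:qfhe_circuitprivacy}, asserts that a direct inspection of the CDM20 argument shows the closeness already holds for inputs entangled with an adversarial side register. Your proposal tries to make that footnote rigorous via a general lifting lemma, and that lemma is where the gap is. The claim that $\trD{\Phi(\tau)}{\Psi(\tau)} \le \negl(\secparam)$ for every $\tau \in \cD(\cC^{\otimes n})$ implies $\trD{(\Phi \otimes \id_A)\rho}{(\Psi \otimes \id_A)\rho} \le \negl(\secparam)$ for every $\rho$ on the joint system is false for general CPTP maps. Monotonicity of trace distance bounds $\trD{\Lambda(\rho)}{\Lambda(\sigma)}$ by $\trD{\rho}{\sigma}$ for \emph{one} channel applied to \emph{two} states; it says nothing about \emph{two} channels applied to \emph{one} entangled state. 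What you are really asserting is that the induced trace norm $\sup_\tau \|\Phi(\tau)-\Psi(\tau)\|_1$ controls the diamond norm $\|\Phi-\Psi\|_{\diamond}$, and these can differ by a factor of the input dimension; the Werner--Holevo channels are the standard counterexample (nearly indistinguishable on every input state of the system alone, perfectly distinguishable once an entangled reference is allowed). Your closing ``sanity check''---that the evaluation-and-sanitization pipeline is a CPTP map on $\cC^{\otimes n}$ alone---does not rescue the argument; that is precisely the setting in which the gap between the two norms arises.

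The fix is not an abstract purification argument but the (admittedly terse) route the paper gestures at: re-open the CDM20 proof and check that the statistical closeness is established in a form that already tensors with $\id_A$, e.g.\ because the real evaluation and the simulated re-encryption are shown to be the \emph{same} channel except on a negligible-probability set of the evaluator's classical randomness (noise flooding / rerandomization), a statement that survives entanglement with $A$ for free. Your part (ii), on ciphertexts of classical plaintexts remaining classical through encryption and evaluation, is fine and is exactly the property the paper requires.
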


\subsection{Compute-and-Compare Circuits and Quantum-Secure Lockable Obfuscation}
\label{sec:prelims_lobf}
\paragraph{Compute-and-compare Circuits.} The subclass of circuits that we are interested in is called compute-and-compare circuits, denoted by $\cktclass_{\cnc}$. A compute-and-compare circuit is of the following form: $\lockC[C,\alpha, \beta]$, where $\alpha$ is called a lock and $C$ has output length $|\alpha|$, is defined as follows: 
$$ \lockC[C,\alpha](x) =  \left\{\substack{\beta, \ \text{ if } C(x)=\alpha,\\ \ \\ 0,\ \text { otherwise }  } \right. $$   

\begin{definition}[Quantum-Secure Lockable Obfuscation]
An obfuscation scheme $(\lobf,\leval)$ for a class of compute-and-compare circuits $\cktclass_{\cnc}$ is said to be a \textbf{quantum-secure lockable obfuscation scheme} if the following properties are satisfied: 
\begin{itemize}
    \item It satisfies the functionality of obfuscation. 
    \item {\bf Security}: For every polynomial-sized circuit $C$, string $\beta \in \{0,1\}^{\poly(\secparam)}$,for every QPT adversary $\adversary$ there exists a QPT simulator $\simulator$ such that the following holds: sample $\alpha \xleftarrow{\$}  \{0,1\}^{\poly(\secparam)}$,
    $$\left\{ \lobf \left( 1^{\secparam},\lockC \right) \right\} \approx_{Q,\varepsilon} \left\{\simulator\left(1^{\secparam},1^{|C|} \right) \right\},$$
    where $\lockC$ is a circuit parameterized by $C,\alpha,\beta$ with $\varepsilon \leq \frac{1}{2^{|\alpha|}}$.  
\end{itemize}

\end{definition}

\paragraph{Instantiation.} The works of~\cite{WZ17,GKW17,GKVW19} construct a lockable obfuscation scheme based on polynomial-security of learning with errors. Since learning with errors is conjectured to be hard against QPT algorithms, the obfuscation schemes of~\cite{WZ17,GKW17,GKVW19} are also secure against QPT algorithms. Thus, we have the following theorem. 

\begin{theorem}[\cite{GKW17,WZ17,GKVW19}]
Assuming quantum hardness of learning with errors, there exists a quantum-secure lockable obfuscation scheme. 
\end{theorem}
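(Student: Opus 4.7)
The plan is essentially a proof by invocation: the statement follows directly from the lockable-obfuscation constructions of \cite{GKW17,WZ17,GKVW19}, combined with the observation that their security reductions remain valid against QPT adversaries under QLWE. First I would recall that each of the cited works builds a lockable obfuscation scheme for polynomial-size circuits whose functional correctness exactly matches the definition above, and whose security is established via a (classical) polynomial-time reduction to the polynomial hardness of LWE, achieving distinguishing advantage bounded by $2^{-|\alpha|}\cdot\negl(\secparam)$ against the uniform-lock distribution. In particular, the simulator $\simulator(1^\secparam, 1^{|C|})$ demanded by the security definition is already exhibited in each of these papers as a probabilistic polynomial-time sampler whose output distribution is independent of the actual lock $\alpha$ and the payload $\beta$.

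The second and substantive step is to argue that the same constructions and reductions carry over verbatim to the quantum setting. I would inspect the reductions of \cite{GKW17,WZ17,GKVW19} and confirm that they are \emph{straight-line} and \emph{black-box}: given any adversary that distinguishes the obfuscation from the simulator's output, the reduction runs the adversary once on a suitably sampled LWE-type instance and outputs a decision bit whose bias equals (up to polynomial factors) the adversary's distinguishing advantage. Because such straight-line black-box reductions relativize to quantum adversaries with no modification — no rewinding, no invocation of classical-only proof techniques such as auxiliary-input extraction, and no assumptions on the adversary's internal state — replacing the classical LWE assumption with QLWE immediately upgrades the soundness of the reduction to hold against QPT distinguishers. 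This yields the claimed quantum-secure lockable obfuscation scheme.

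The only potential obstacle, and where I would spend most of the care in writing the proof, is a careful line-by-line check that none of the hybrid arguments in the three cited constructions secretly rely on rewinding or on classical simulation of an adversary's internal computation. A priori, this is the place where classical-to-quantum security upgrades can fail. However, as noted in prior work using these obfuscators in a quantum context (for example \cite{CLLZ21}), the GKW/WZ/GKVW reductions are standard hybrid arguments in which the reduction feeds the obfuscation adversary a single sample derived from the LWE challenge and reads off a single output bit; no intermediate state of the adversary is ever re-used. Once this inspection is recorded, the theorem is established with no further technical content beyond quoting the corresponding classical results.
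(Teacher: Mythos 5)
Your proposal matches the paper's own justification: the paper simply notes that \cite{WZ17,GKW17,GKVW19} construct lockable obfuscation from polynomial-hardness of LWE and that, since LWE is conjectured hard for QPT algorithms, the same schemes are quantum-secure. Your additional observation that the reductions are straight-line and black-box (hence relativize to quantum adversaries) is a more careful articulation of the same one-step argument, so the approach is essentially identical.
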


\subsection{Classical-Accessible Oracles}
We call an oracle $\mathcal{O} : \mathcal{X} \to \mathcal{Y}$ \textit{classical-accessible} if it only accepts a classical query $ x \in X$, to which it responds with the classical value $\mathcal{O}(x)$. More formally, a query to $\cO$ made by a quantum algorithm $\alice$ can be described as the following quantum operation: measure the $\mathcal{X}$ (query) register of $\alice$ in the computational basis to obtain $x \in \mathcal{X}$, then XOR the value $\cO(x)$ to the $\cY$ (answer) register. In other words, if $\alice$ queries the oracle in state $\sum_{x,y,z}\alpha_{x,y,z} \ket{x}\ket{y}\ket{z}$, where $\mathcal{Z}$ is an ancillary register corresponding to the internal state of $\alice$, then the state after the query equals $\sum_{y,z} \beta_{y,z} \ket{x}\ket{y \oplus \cO(x)}\ket{z}$ with probability $p_x = \sum_{y,z} |\alpha_{x,y,z}|^2$, where $\beta_{y,z} = \alpha_{x,y,z}/\sqrt{p_x}$.

\subsubsection{Classical-Accessible Random Oracle Model} \label{sec:prelims_crom} In classical-accessible random oracle model (\carom{}), the function in question is assumed to be a uniformly random function $\mathcal{O} : \mathcal{X} \to \mathcal{Y}$ and modeled as a classical-accessible oracle. \\

\noindent A classical-accessible random oracle $\mathcal{O}: X \to Y$ can be efficiently simulated \textit{on-the-fly} as follows: \begin{itemize}
    \item Create an empty database $D \subset X \times Y$.
    \item On query $x \in X$, check if $D$ contains $x$. If yes, answer consistently; otherwise sample $y \uniform Y$, add $(x,y)$ to $D$, and answer with $y$.
\end{itemize}
\noindent A classical-accessible oracle simulated \textit{on-the-fly} is perfectly indistinguishable from a classical-accessible random oracle.

\subsection{Copy-Protection} \label{sec:prelims_cp}
Below we present the definition of a copy-protection scheme, adapted from \cite{broadbent2021secure} and originally due to \cite{Aar09}.
\begin{definition} [Copy-Protection Scheme] \label{def:copyprotection}
 Let $\fclass = \fclass(\secparam)$ be a class of efficiently computable functions of the form $f: X \to Y$. A copy protection scheme for $\fclass$ is a pair of QPT algorithms $(\cp, \eval)$ such that for some output space $\density{Z}$: \begin{itemize}
    \item \textbf{Copy Protected State Generation:} $\cp(1^\secparam, d_f)$ takes as input the security parameter $1^\secparam$ and a classical description $d_f$ of a function $f \in \fclass$ (that efficiently computes $f$). It outputs a mixed state $\rho_f \in \density{Z}$.
    \item \textbf{Evaluation:} $\eval(1^\secparam, \rho, x)$ takes as input the security parameter $1^\secparam$, a mixed state $\rho\in \density{Z}$, and an input value $x\in X$. It outputs a bipartite state $\rho' \otimes \ket{y}\bra{y} \in \density{Z} \otimes \density{Y}$.
    
\end{itemize}
\end{definition}
\noindent We will sometimes abuse the notation and write $\eval(1^\secparam, \rho, x)$ to denote either the classical output $y \in Y$ or the residual state $\rho'$ alone when the other is insignificant. Note that when we work in {\carom}, the algorithms $(\cp, \eval)$ will have access to the random oracle and be written as $(\cp^\cO, \eval^\cO)$. \\

\par There are three properties we require of a copy-protection scheme: correctness, reusability, and security.

\paragraph{Correctness:} Informally speaking, if an honestly generated copy-protected state $\rho_f$ for a function $f \in \fclass$ is honestly evaluated using $\eval$ on any input $x \in X$, the output should be $f(x)$. We present the formal definition below:

\begin{definition}[Correctness]
A copy-protection scheme $(\cp,\eval)$ for $\fclass$ is \textbf{$\delta$-correct} if the following holds for every $x \in X$, $f \in \fclass$:
$$\prob \left[ f(x) \leftarrow \eval(1^\secparam, \rho_f,x)\ :\ \rho_f \leftarrow \cp(1^{\secparam},d_f) \right] \geq \delta.$$
\end{definition}

\noindent Correctness property can be relaxed by averaging over inputs $x$ sampled from a distribution $\distr$.

\begin{definition}[Mean-Correctness] Let $\distr_X$ be a distribution over $X$.
A copy-protection scheme $(\cp,\eval)$ for $\fclass$ is\textbf{ $(\distr_X,\delta)$-mean-correct} if the following holds for every $f \in \fclass$:
$$\prob \left[ f(x) \leftarrow \eval(1^\secparam, \rho_f,x) \ :\ \substack{\rho_f \leftarrow \cp(1^{\secparam},d_f)\\ \ {x \leftarrow \distr_X}} \right] \geq \delta. $$
\end{definition}

\paragraph{Reusability:} The correctness notions we define above are for a single evaluation. In practice, we would like our copy-protected program to evaluate polynomially many inputs without losing its functionality. Accordingly, we define reusability below as a stronger version of these correctness notions.

\begin{definition}[Reusability] \label{def:reusable}
  Let $(\cp, \eval)$ be a $\delta$-correct copy-protection scheme for $\fclass$. Then, $(\cp, \eval)$ is called \textbf{reusable} if the following holds for every $m = \poly(\secparam)$, every $(x_1, \dots, x_{m}) \in X^{m}$ and every $f \in \fclass$: \begin{align*}
      \prob \bracketsSquare{ f(x_m) \from \eval(1^\secparam, \rho_f^{m}, x_m) \; : \; \substack{ \rho_f^{1} \from \cp(1^\secparam, d_f) \\ \rho_f^{i+1} \from \eval(1^\secparam, \rho_f^{i}, x_i), \; 1 \le i \le m-1 }} \ge \delta - \negl(\secparam).
  \end{align*}
  
  \par Similarly, let $(\cp, \eval)$ be a $(\distr_X, \delta)$-mean-correct copy-protection scheme for $\fclass$. Then, $(\cp, \eval)$ is called \textbf{reusable} if the following holds for any $m = \poly(\secparam)$ and any $f \in \fclass$:
  \begin{align*}
      \prob \bracketsSquare{ f(x_m) \from \eval(1^\secparam, \rho_f^{m}, x_m) \; : \; \substack{ \rho_f^{1} \from \cp(1^\secparam, d_f) \\ x_i \from \distr_X, \; 1 \le i \le m \\ \rho_f^{i+1} \from \eval(1^\secparam, \rho_f^{i}, x_i), \; 1 \le i \le m-1 }} \ge \delta - \negl(\secparam).
  \end{align*}
\end{definition}

\begin{remark} \label{rem:reusability}
In the plain model or the quantum-accessible oracle models, reusability is implied by correctness by \Cref{cor:aagan}. In particular, $\delta = 1 - \negl(\secparam)$ yields $\gamma = \negl(\secparam)$ and $\delta = 1 - 1 / \poly(\secparam)$ yields $\gamma = 1 / \poly(\secparam)$. However, in the classical-accessible setting such an implication is unclear due to the fact that the oracle queries force an intermediate measurement that cannot be pushed to the end.
\end{remark}

\paragraph{Security:} Security in the context of copy-protection means that given a copy-protected program $\rho_f$ of a function $f \in \fclass$, no QPT adversary can produce two programs that can both be used to compute $f$. This is captured in the following definition adapted from the "malicious-malicious security" definition in \cite{broadbent2021secure}: 
\begin{definition}[Piracy Experiment]
\label{def:piracyexperiment}
A \textbf{piracy experiment} is a game defined by a copy-protection scheme $(\cp, \eval)$, a distribution $\distr_\fclass$ over $\fclass$, and a class of distributions $\distrclass = \{\distrclass(f)\}_{f \in \fclass}$ over $X$. It is the following game between a challenger and an adversary, which is a triplet of algorithms $\abc$:
\begin{itemize}
    \item \textbf{Setup Phase:} The challenger samples a function $f\leftarrow \distr_\fclass$ and sends $\rho_f \leftarrow \cp(1^\secparam, d_f)$ to $\alice$.
    \item \textbf{Splitting Phase:} $\alice$ applies a CPTP map to split $\rho_f$ into a bipartite state $\rho_{BC}$; she sends the $B$ register to $\bob$ and the $C$ register to $\charlie$. No communication is allowed between $\bob$ and $\charlie$ after this phase.
    \item \textbf{Challenge Phase:} The challenger samples $(x_B, x_C) \leftarrow \distrclass(f) \times \distrclass(f)$ and sends $x_B, x_C$ to $\bob, \charlie$, respectively.
    \item \textbf{Output Phase:} $\bob$ and $\charlie$ output\footnote{Since $\bob$ and $\charlie$ cannot communicate, the order in which they use their share of the copy-protected program is insignificant.} $y_B \in Y$ and $y_C \in Y$, respectively, and send to the challenger. The challenger outputs 1 if $y_B = f(x_B)$ and $ y_C = f(x_C)$, indicating that the adversary has succeeded, and 0 otherwise.
\end{itemize}
The bit output by the challenger is denoted by $\pirateexp{\cp}{\eval}{\distr_\fclass}{\distrclass}(1^\secparam, \abc)$.

\end{definition}
\begin{definition}[Copy-Protection Security] \label{def:uncloneability}
Let $(\cp, \eval)$ be a copy-protection scheme for a class $\fclass$ of functions $f:X \to Y$. Let $\distr_\fclass$ be a distribution over $\fclass$ and $\distrclass = \left\{\distrclass(f)\right\}_{f \in \fclass}$ a class of distributions over $X$. Then, $(\cp, \eval)$ is called $\left(\distr_\fclass,\distrclass,\delta\right)$-\textbf{secure} if any QPT adversary $\abc$ satisfies \[\prob \left[ b = 1\ :\ b \leftarrow \pirateexp{\cp}{\eval}{\distr_\fclass}{\distrclass}\left(1^\secparam, \abc\right)  \right] \le \delta .\]

\end{definition}

\noindent Note that this definition is referred to as malicious-malicious security because the adversary is free to choose the registers $B,C$ as well as the evaluation algorithms used by $\bob$ and $\charlie$. \\

\subsection{Quantum-Unlearnability} \label{sec:prelims:unlearnability}
Below we state the definition of a quantum unlearnable circuit class from \cite{ALP20}, which states that a QPT adversary cannot output a quantum implementation of $C \in \cC$ given oracle access:
\begin{definition}
\label{def:unlearnability}
A circuit class $\cC$ is \textbf{$\nu$-quantum unlearnable} with respect to distribution $\distrc$ over $\cC$ if for any quantum adversary $\alice^C$ making at most $\poly(\secparam)$ queries, we have $$\prob\left[ \forall x, \prob[U^*(\rho^*,x)=C(x)]\geq \nu \ :\ \substack{C \leftarrow \distrc \\ (U^*,\rho^*) \leftarrow \cA^{C(\cdot)}(1^\secparam)} \right] \leq \negl(\secparam).$$ 
If there exists a distribution $\distr_C$ satisfying this property, then $\cktclass$ is simply called \textbf{$\nu$-quantum unlearnable}.
\end{definition}

\section{Impossibility in Classical-Accessible Random Oracle Model}
\noindent We show the infeasibility of copy-protection in the classical-accessible {\em random} oracle model (\carom). We prove this in two steps: 
\begin{itemize}
    \item First, we show the impossibility of approximately correct copy-protection (even without reusability) in the plain model. 
    \item Next, we show that any copy-protection in CAROM can be transformed into approximately-correct copy-protection in the plain model. Invoking the above result, it follows that copy-protection in CAROM is impossible. 
\end{itemize}

\newcommand{\newbfckt}{\mathsf{CC}}
\newcommand{\gcktclass}{\mathcal{G}}
\newcommand{\pqfhe}{\widetilde{\qfhe}}
\newcommand{\pkprime}{\widetilde{\mathsf{PK}}}
\newcommand{\skprime}{\widetilde{\mathsf{SK}}}

\subsection{Impossibility of Approximate Copy-Protection} \label{sec:impossibilityplain}
Following the approach of \cite{ALP20}, we will construct a class of circuits, which cannot be learned via oracle access, yet can be learned given the quantum circuit. The key tool enabling the latter is the power of homomorphic evaluation. The class of circuits we construct is related to but different from that constructed by \cite{ALP20}.

\begin{theorem}
\label{thm:imp:approxcp}
Assuming the existence of QFHE with malicious circuit privacy and QLWE, there exists an unlearnable class $\gcktclass$ of circuits of the form $G:  X \to Y$ and an input distribution $\distr_X$ over $X$ such that there exists no copy-protection scheme $(\cp, \eval)$ in the plain model for $\gcktclass$ with $(\distr_X, 1 - \varepsilon(\secparam))$-mean-correctness and $(\distr_\gcktclass, \distrclass, \delta(\secparam))$-security for any $(\varepsilon, \delta)$ satisfying $\delta \le 1 - 3\sqrt{\varepsilon}$ and for any $\distr_\gcktclass, \distrclass$.
\end{theorem}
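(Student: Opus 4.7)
The plan is to follow the blueprint sketched in the overview: construct a suitable unlearnable class $\gcktclass$, reduce the hypothetical $(\distr_X, 1-\varepsilon)$-mean-correctness to per-input correctness via QFHE, and then mount an ALP20-style non-black-box cloning attack that evaluates the copy-protected state only a bounded number of times.

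First, I will define $\gcktclass$ by embedding a base unlearnable class inside a homomorphic evaluation. Let $\cktclass_{\cnc}$ be the class of compute-and-compare circuits, whose unlearnability follows from the security of quantum-secure lockable obfuscation. Index $\gcktclass$ by tuples $(C, \pk)$ with $C \in \cktclass_{\cnc}$ and $(\pk, \sk) \leftarrow \qfhe.\gen(1^\secparam)$, and set $G_{C,\pk}(ct) := \qfhe.\eval(\pk, C, ct)$. The input distribution $\distr_X$ samples a plaintext $x$ from the hard input distribution for $C$ and outputs $ct \leftarrow \qfhe.\enc_{\pk}(x)$. Unlearnability of $\gcktclass$ then reduces to unlearnability of $\cktclass_{\cnc}$: by malicious circuit privacy of QFHE (\Cref{def:qfhe_circuitprivacy}), oracle access to $G_{C,\pk}$ can be efficiently simulated from black-box access to $C$ together with the public string $\pk$, so any learner for $\gcktclass$ yields a learner for $\cktclass_{\cnc}$.

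Next, suppose for contradiction that a scheme $(\cp, \eval)$ for $\gcktclass$ satisfies $(\distr_X, 1-\varepsilon)$-mean-correctness and $(\distr_\gcktclass, \distrclass, \delta)$-security with $\delta \le 1 - 3\sqrt{\varepsilon}$. The first technical step is to translate mean-correctness over random ciphertexts into per-input correctness for the inner circuit. Fixing any plaintext $x$, the marginal distribution of $\qfhe.\enc_{\pk}(x)$ over fresh encryption randomness is a valid sub-distribution of $\distr_X$ conditioned on that plaintext, and over a freshly sampled $(\pk, \sk)$ the argument covers every $x$. An averaging argument then shows that the derived scheme, which encrypts its input under $\pk$, runs $\eval$, and decrypts with $\sk$, is $(1-\varepsilon)$-correct on every fixed input of the underlying compute-and-compare circuit.

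Finally, I will invoke a non-black-box cloning attack in the spirit of ALP20 on this per-input correct scheme. The attacker keeps a freshly sampled QFHE keypair in its internal state and applies at most two coherent evaluations of $\eval$ on $\rho_G$, together with a homomorphic step that exploits the inner compute-and-compare structure to extract a splitting recipe. By \Cref{cor:aagan} and a quantum union bound, the two evaluations disturb the copy-protected state by at most $2\sqrt{\varepsilon}$ in trace distance, so $\bob$ and $\charlie$ can each independently run $\eval$ on their respective challenge inputs and still succeed with probability at least $1 - 3\sqrt{\varepsilon} > \delta$, contradicting the assumed security. The main obstacle I expect is to carefully account for this trace-distance slack when the inner circuit is hidden behind QFHE: it must be verified that malicious circuit privacy still yields the right indistinguishability when $\eval$ is applied homomorphically inside a second QFHE layer, and that the induced splitter uses $\rho_G$ exactly twice so that the final error remains bounded by $3\sqrt{\varepsilon}$ rather than a larger multiple.
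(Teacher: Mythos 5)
Your high-level decomposition matches the paper's (wrap a base class inside an outer QFHE layer $G_C(\pkprime,CT)=\pqfhe.\eval_{\pkprime}(C,CT)$, use malicious circuit privacy to transfer unlearnability, use semantic security of the outer scheme to upgrade mean-correctness over random ciphertexts to per-input correctness, then run a two-evaluation non-black-box attack with the $3\sqrt{\varepsilon}$ loss). However, there is a genuine gap in your choice of base class. You take the inner circuits to be generic compute-and-compare circuits $\lockC[C,\alpha,\beta]$, but such circuits offer no handle for the extraction step: a quantum implementation of an evasive compute-and-compare circuit reveals nothing about $\alpha$ or $\beta$ even under homomorphic evaluation, and indeed lockable obfuscation shows these circuits \emph{can} be obfuscated, so they cannot be the engine of an impossibility. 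The paper instead uses the ALP20 class $\cktclass$ of circuits $C_{a,b,r,\pk,\cO}$ that are deliberately self-revealing: on the special input $0$ the circuit outputs $\qfhe.\enc(\pk,a;r)$, the public key $\pk$, and a lockable obfuscation $\cO$ of the circuit ``decrypt under $\sk$ and compare to $b$'' with payload $(\sk\|r)$. The attack then evaluates $\rho_G$ once on an encryption of $0$ to obtain $\enc_\pk(a)$, $\cO$, $\pk$; homomorphically evaluates $\eval$ a second time (under $\pk$, with the residual state itself encrypted) to obtain $\enc_\pk(b)$; feeds this to $\cO$ to release $\sk$ and $r$; and decrypts to recover the \emph{entire classical description} of $G$. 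This classical description is then copied and handed to both $\bob$ and $\charlie$, which breaks security for every challenge distribution. Your compute-and-compare class has no analogue of this bootstrapping loop, so the ``homomorphic step that extracts a splitting recipe'' has nothing to act on.

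A secondary issue is your output phase: you say $\bob$ and $\charlie$ ``each independently run $\eval$ on their respective challenge inputs,'' but they cannot both hold the residual quantum state --- that is precisely the cloning obstruction. The attack must terminate in a classical, duplicable object (the description of $G$), which is what the self-revealing structure of $\cktclass$ buys. Your error accounting (``the two evaluations disturb the state by $2\sqrt{\varepsilon}$'') also does not match the actual bookkeeping, where the first evaluation succeeds with probability $1-\varepsilon$, perturbs the state by $\sqrt{\varepsilon}$, and the second (homomorphic) evaluation then succeeds with probability $1-\varepsilon-\sqrt{\varepsilon}$, giving $1-3\sqrt{\varepsilon}$ overall; but this is minor compared to the choice of circuit class.
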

\begin{proof}
Let $\lobf$ be a quantum-secure lockable obfuscation scheme (see \Cref{sec:prelims_lobf}) and $\qfhe$ be a quantum fully homomorphic encryption scheme satisfying malicious circuit privacy (see \Cref{sec:prelims_qfhe}). We first recall the circuit class $\cktclass$ used in the impossibility result of \cite{ALP20}, presented in~\Cref{fig:approxcp:ckt}. Every circuit in $\cktclass$ is of the form $C_{a,b,r,\pk,\cO}$, where $a,b,r,\pk,\cO$ are parameters described below. 
\begin{itemize}[noitemsep]
    \item $a,b,r \in \{0,1\}^{\secparam}$. 
    \item $(\pk,\sk)$ is in the support of $\qfhe.\gen(1^\secparam)$.
    \item $\cO$ is in the support of $\lobf(\newbfckt[\allowbreak \qfhe.\dec(\sk,\cdot),b,\allowbreak (\sk||r)])$.
\end{itemize}
\noindent

Let $\pqfhe=(\pqfhe.\setup, \pqfhe.\enc, \pqfhe.\dec, \pqfhe.\eval)$ be a QFHE scheme with malicious circuit privacy (\Cref{def:qfhe_circuitprivacy}). 
\par Using $\cktclass$, we define\footnote{In both $\cktclass$ and $\gcktclass$, we assume padding with zeros appropriately to make input/output lengths compatible.} another class of circuits $\gcktclass$ in~\Cref{fig:approxcpclass}. We show that $\gcktclass$ cannot be copy-protected. Every circuit in $\gcktclass$ is of the form $G_{C_{a,b,r,\pk,\cO}}$, which is a fixed circuit parameterized by a circuit $C_{a,b,r,\pk,\cO} \in \cktclass$. 

\begin{figure}[!htb]
\begin{center}
\begin{tabular}{|p{12cm}|}
\hline \\
\noindent {$\underline{C_{a,b,r,\pk,\cO}(x)}$:} \\
\begin{enumerate}
    \item If $x = 0$, output $\qfhe.\enc\left(\pk,a;r\right)|\cO|\pk$, where $\cO$ is generated as follows: \newline 
    $\allowbreak\cO \leftarrow \lobf(\newbfckt[\allowbreak \qfhe.\dec(\sk,\cdot),b,\allowbreak (\sk|r)])$, where $\newbfckt$ is a compute-and-compare circuit.
    \item Else if $x=a$, output $b$.
    \item Otherwise, output $0$
\end{enumerate}
 \\
     \hline
\end{tabular}
\end{center}
\caption{Description of a circuit $C_{a,b,r,\pk,\cO}$ in $\cktclass$.}
\label{fig:approxcp:ckt}
\end{figure}

\begin{figure}[!htb]
\begin{center}
\begin{tabular}{|p{12cm}|}
\hline \\
\noindent {$\underline{G_{C_{a,b,r,\pk,\cO}}(\pkprime,X)}$:} \\
\begin{enumerate}
    \item Parse $X$ as ciphertext $CT$. 
    \item Compute and output $CT^* \leftarrow \pqfhe.\eval_{\pkprime}(C_{a,b,r,\pk,\cO},CT)$.
\end{enumerate}
 \\
     \hline
\end{tabular}
\end{center}
\caption{Description of a circuit $G \in \gcktclass$, used for proving impossibility of approximate CP}
\label{fig:approxcpclass}
\end{figure}

\paragraph{Unlearnability.} We will first show that $\gcktclass$ is unlearnable with respect to a distribution $\distrc(\secparam)$, which we define as follows: $\distrc(\secparam)$ outputs a circuit from $\cktclass_{\secparam}$ by sampling $a,b,r \xleftarrow{\$} \{0,1\}^{\secparam}$, then computing $(\pk,\sk)\leftarrow \qfhe.\gen(1^\secparam)$, and finally computing an obfuscation $\cO \leftarrow \lobf(\newbfckt[\allowbreak \qfhe.\dec(\sk,\cdot), \allowbreak (b,\allowbreak (\sk|r))])$, where $\newbfckt$ is a compute-and-compare circuit. Since every $G_C \in \gcktclass$ is uniquely determined by the description of $C \in \cktclass$, $\distrc(\secparam)$ induces a distribution $\distr_\gcktclass(\secparam)$ on $\gcktclass$ given by $G_C: C \leftarrow \distrc$.

\par We cite the following result:
\begin{lemma}[Proposition 46 in \cite{ALP20}]
\label{lem:unlearnableC} For any non-negligible function $\nu = \nu(\secparam)$, $\cktclass$ is $\nu$-quantum unlearnable with respect to $\distrc(\secparam)$.
\end{lemma}
Using this as a black-box, we can now prove unlearnability of $\gcktclass$:
\begin{lemma}
\label{lem:unlearnableG}
$\gcktclass$ is unlearnable with respect to the distribution $\distr_\gcktclass(\secparam)$. 
\end{lemma}
\begin{proof}
Suppose that there exists an unbounded quantum adversary $\alice^{G_C(\cdot)}$ which makes at most $\poly(\secparam)$ queries and satisfies $$\prob\left[ \forall x, \prob[U^*(\rho^*,x)=G(x)]\geq \nu(\secparam) \ :\ \substack{C \leftarrow \distrc \\ (U^*,\rho^*) \leftarrow \cA^{G_C(\cdot)}(1^\secparam)} \right] \geq \mu(\secparam)$$
for some non-negligible functions $\nu, \mu$. We will use $\alice^{G_C(\cdot)}$ to construct an adversary which violates the unlearnability of $C$.\\
\noindent We will proceed in two steps:
\begin{itemize}
    \item By malicious circuit privacy of $\pqfhe$, there exists an unbounded quantum adversary $(\alice')^{C(\cdot)}$ which makes at most $\poly(\secparam)$ queries to $C(\cdot)$ and satisfies \begin{align}
    \label{eq:learning} &\prob\left[ \forall x, \prob[U^*(\rho^*,x)=G(x)]\geq \nu(\secparam) - \negl(\secparam) \ :\ \substack{C \leftarrow \distrc \\ (U^*,\rho^*) \leftarrow (\cA')^{C(\cdot)}(1^\secparam)} \right] \nonumber  \\
    &\geq \mu(\secparam) - \negl(\secparam).\end{align}
    $\alice'$ simply uses $(\pqfhe.\Sim, \pqfhe.\ext)$ and oracle access to $C(\cdot)$ to simulate the queries of $\alice.$
    \item Using $(\alice')^{C(\cdot)}$, we will construct $(\widetilde{\alice'})^{C(\cdot)}$ which makes at most $\poly(\secparam)$ queries to $C(\cdot)$ and satisfies \begin{align}
    \label{eq:learning2}&\prob\left[ \forall x, \prob[U^*(\rho^*,x)=C(x)]\geq \nu(\secparam) - \negl(\secparam) \ :\ \substack{C \leftarrow \distrc \\ (U^*,\rho^*) \leftarrow (\widetilde{\cA'})^{C(\cdot)}(1^\secparam)} \right] \nonumber  \\ &\geq \mu(\secparam) - \negl(\secparam).\end{align}
    $(\widetilde{\alice'})^{C(\cdot)}$ does the following: \begin{itemize}
        \item Run $(\alice')^{C(\cdot)}(1^\secparam)$ to obtain $(U, \rho)$
        \item Output $(U^*, \rho)$, where $U^*(\rho, x)$ does the following: \begin{enumerate}
            \item Compute $(\pk, \sk) \leftarrow \pqfhe.\setup(1^\secparam)$ and $\ct \leftarrow \pqfhe.\enc(\pk, x)$
                \item Compute $\ct^* \leftarrow U(\rho, (\pk, \ct))$.
            \item Output $y \leftarrow \pqfhe.\dec(\sk, \ct^*)$.
        \end{enumerate}
    \end{itemize}
    \ \\
    \noindent Conditioned on $(U, \rho)$ satisfying the event in \cref{eq:learning}, $(U, \rho^*)$ will satisfy the event in \cref{eq:learning2} by correctness of $\pqfhe$ evaluation, thereby violating $\mu$-quantum unlearnability of $\cktclass$.
\end{itemize}

\end{proof}

\begin{lemma}
There exists $\distr_X$ such that there is no copy-protection scheme $(\cp, \allowbreak \eval)$ in the plain model for $\gcktclass$ with $(\distr_X, 1 - \varepsilon(\secparam))$-mean-correctness and $(\distr_\gcktclass,\allowbreak \distrclass,\allowbreak \delta(\secparam))$-security for any $\distrclass$ and any $\varepsilon, \delta$ satisfying $\delta \le 1 - 3\sqrt{\varepsilon}$.
\end{lemma}
\begin{proof}
\noindent Define $$ \distr_{X} :=  \left\{  \pqfhe.\enc_{\pkprime}(x) : \substack{ (\pkprime, \skprime) \leftarrow \pqfhe.\setup(1^\secparam) \\ x \uniform \{0,1\}^{\poly(\secparam)}}  \right\} $$ 

Suppose there is an approximate copy-protection scheme $(\cp,\eval)$ for $\gcktclass$ with $1-\varepsilon(\secparam)$-mean-correctness. Let $\rho \leftarrow \cp(1^{\secparam},G_{C_{a,b,r,\pk,\cO}})$. We construct a QPT adversary $\adversary$ in~\Cref{fig:adversary} that given $\rho$, can recover an approximate classical description of $G$ with probability greater than $\delta(\secparam)$, hence violating security\footnote{The attack is simply to recover the classical description and send it to both parties at the splitting phase. Note that this attack succeeds irrespective of the test distributions $\distrclass$.}.\\

\begin{figure}[!htb]
\begin{center}
\begin{tabular}{|p{13cm}|}
\hline \\
\noindent {$\adversary(\rho)$:} \\
\begin{enumerate}
    \item Compute $(\pkprime, \skprime) \leftarrow \pqfhe.\setup(1^{\secparam})$. 
    \item Compute $CT \leftarrow \pqfhe.\enc_{\pkprime}(   0)$. 
     \item Run the copy-protection evaluation, $\rho' \otimes CT^* \leftarrow \eval(1^\secparam,\rho,CT)$. 
    \item Run the decryption, $(\ct_a|\cO|\pk) \leftarrow \pqfhe.\dec_{\skprime}(CT^*)$. Since $\ct_a$ is classical, maintain a copy of $\ct_a$. 
    \item Run the $\qfhe$ homomorphic evaluation, $$\ct_b^* \leftarrow \qfhe.\eval_{\pk}(\Phi,\ct_a),$$
    where $\Phi$ is a quantum circuit that on input $\sigma$, does the following:
    \begin{enumerate}
        \item Compute $\sigma_1\leftarrow\pqfhe.\enc(\pkprime,\sigma)$.
        \item \label{importantstep} Compute $\sigma_2 \leftarrow \qfhe.\eval( \pk,\eval(1^\secparam,\cdot,\cdot), \qfhe.\enc(\pkprime, \rho') ,\sigma_1)$.
        \item Compute $\sigma_3 \leftarrow \pqfhe.\dec(\skprime, \sigma_2)$ and output $\sigma_3$.
    \end{enumerate}
    \item Compute the unitary $\cO$ on $\ct_b^*$ and measure basis to obtain $(sk'|r')$. 
    \item Compute $a \leftarrow \qfhe.\dec(sk',\ct_a)$ and $b \leftarrow \qfhe.\dec(sk',\ct_b^*)$.
    \item Output $G_{C_{a,b,r',\pk,\cO}}$. 
\end{enumerate}
 \\
     \hline
\end{tabular}
\end{center}
\caption{Description of $\adversary$}
\label{fig:adversary}
\end{figure}

\textbf{Analysis of $\alice(\rho)$:}
By approximate correctness of $(\cp, \eval)$, we have \begin{align}
    &\prob \left[ G_C(x) \leftarrow \eval(1^\secparam, \rho,x) \ :\ \substack{\rho \leftarrow \cp(1^{\secparam},G_C)\\ \ {x \leftarrow \distr_X}} \right]\nonumber \\ &= \prob \left[ G_C(x) \leftarrow \eval(1^\secparam, \rho, \pqfhe.\enc_{\pkprime}(x)) \ :\ \substack{\rho \leftarrow \cp(1^{\secparam},G_C)\\ \ (\pkprime, \skprime) \leftarrow \pqfhe.\setup(1^\secparam)  \\
        \ x \uniform \{0,1\}^{\poly(\secparam)}} \right] \nonumber\\ \label{eq:meancorrectness}
        &\geq 1 - \varepsilon(\secparam).
\end{align}
Define 
\begin{align}
    \zeta_x := \prob \left[ G_C(x) \leftarrow \eval(1^\secparam, \rho, \pqfhe.\enc_{\pkprime}(x)) \ :\ \substack{\rho \leftarrow \cp(1^{\secparam},G_C)\\ \ (\pkprime, \skprime) \leftarrow \pqfhe.\setup(1^\secparam) } \right],
    \end{align}
    so that \cref{eq:meancorrectness} can be written as \begin{align}
    \label{eq:expectation}
        \E_{x \uniform \{0,1\}^{\poly(\secparam)}} \left[\zeta_x \right] \ge 1 - \varepsilon(\secparam).
\end{align}
We observe that $|\zeta_x - \zeta_{x'}| \le \negl(\secparam)$ for any $x,x' \in \{0,1\}^{\poly(\secparam)}$. To see this, suppose the difference is not negligible. Define an adversary $\alice'$ who can break semantic security of $\pqfhe$. Given a ciphertext $CT \leftarrow \pqfhe.\enc(\pkprime, x)$, $\alice'$ does the following: \begin{itemize}
    \item Sample $\rho \leftarrow \cp(1^{\secparam},G_C)$
    \item Output 1 if $G_C(x) \leftarrow \eval(1^\secparam, \rho, CT)$, and 0 otherwise.
\end{itemize}
$\alice'$ outputs 1 with probability $\zeta_x$, hence she can distinguish encryptions of $x$ and $x'$, contradiction. Hence, $|\zeta_x - \zeta_0| \le \negl(\secparam)$ for all $x$. Let $E_1$ be the event that $CT^* = \pqfhe.\enc(\pkprime, \qfhe.\enc_{\pk}(a)\allowbreak | \cO | \pk)$, then by \cref{eq:expectation} we have \begin{align} \label{eq:zeta0}
    \pr{E_1} = \zeta_0 \ge
    \E_{x \uniform \{0,1\}^{\poly(\secparam)}} \left[\zeta_x - \negl(\secparam) \right] \ge 1 - \varepsilon(\secparam) - \negl(\secparam).
\end{align}
At this stage, by modifying $\eval$ if necessary, we can assume $\trD{\rho'}{\rho} \le \sqrt{\varepsilon(\secparam)} + \negl(\secparam)$ by \Cref{cor:aagan}.
Let $E_2$ be the event $\alice$ succeeds in step 4; in particular, $E_2$ implies $\ct_a = \qfhe.\enc(\pk, a)$. By $\pqfhe$ correctness we have $\pr{E_2 \given E_1} \ge 1 - \negl(\secparam)$.

Conditioned on $E_2$, step \ref{importantstep} is nothing but a $\qfhe$ homomorphic evaluation of $\eval(1^\secparam, \rho',\sigma_1)$, where $\sigma_1 \leftarrow \pqfhe.\enc_{\pkprime}(\ct_a)$. \Cref{eq:zeta0} holds for any $\zeta_{x'}$ including $\zeta_{\ct_a}$, hence $\eval(1^\secparam, \rho, \sigma_1)$ will succeed with probability $1 - \varepsilon(x) - \negl(\secparam)$, i.e. it will output a $\qfhe$ encryption of $b$. Since $\rho'$ is $\sqrt{\varepsilon}$-close to $\rho$ in trace distance, it follows that $\eval(1^\secparam, \rho', \ct_a)$ will succeed with probability $1 - \varepsilon(\secparam) - \sqrt{\varepsilon(\secparam)} - \negl(\secparam)$. This inequality together with correctness of $\pqfhe$ and correctness of $\qfhe$ homomorphic evaluation imply that $\pr{E_3 \given E_2} \ge 1 - \varepsilon(\secparam) - \sqrt{\varepsilon(\secparam)} - \negl(\secparam) \ge 1 - 2\sqrt{\varepsilon(\secparam)}$,
where $E_3$ is the event that $\ct_b^* \in \qfhe.\enc_{\pk}(b)$.
Conditioned on $E_1, E_2, E_3$, the adversary can recover the true classical description with probability $1 - \negl(\secparam)$ by correctness of $\qfhe$ and $\lobf$. Thus, $\alice'$ succeeds in recovering the classical description of $G_C$ wiith probability at least \begin{align*} &\pr{E_3} - \negl(\secparam) \ge \pr{E_3 \given E_2} \cdot \pr{E_2 \given E_1} \cdot \pr{E_1} - \negl(\secparam) \ge \\
&\ge \left(1 - \varepsilon(\secparam) - \negl(\secparam)\right)(1 - \negl(\secparam))(1 - 2\sqrt{\varepsilon(\secparam)}) - \negl(\secparam) \\
&\ge 1 - 3\sqrt{\varepsilon(\secparam)},
\end{align*}
which suffices for the proof.

\end{proof}

\end{proof}
\newcommand{\aliceo}{{\alice}^\cO}
\newcommand{\bobo}{\bob^{\cO}}
\newcommand{\charlieo}{{\charlie}^\mathcal{O}}
\newcommand{\abco}{\left(\aliceo, \bobo, \charlieo\right)}
\newcommand{\alicetilde}{\widetilde{\alice}}
\newcommand{\bobtilde}{\widetilde{\bob}}
\newcommand{\charlietilde}{\widetilde{\charlie}}
\newcommand{\abctilde}{\left( \alicetilde, \bobtilde, \charlietilde \right)}
\newcommand{\outerproduct}[1]{\ketbra{#1}{#1}}

\subsection{Impossibility of Copy-Protection in \carom{}}
\label{sec:imp:carom}
Assuming impossibility of approximate copy-protection in the plain model, we will show that it is impossible to have a copy-protection scheme in \carom{} which satisfies $\brackets{1-\negl(\secparam)}$-correctness and $\brackets{1 - \allowbreak \frac{1}{\poly(\secparam)}}$-security. The proof will follow closely the proof in \cite{CKP15}, which shows impossibility of classical virtual-black-box obfuscation in the random oracle model assuming its impossibility in the plain model.

\begin{lemma}
\label{lem:romtoplain}
Let $\delta = \delta(\secparam)$ be a function and $\fclass$ be a class of functions of the form $f: X \to Y$. Let $\distr_\fclass, \distr_X$ be distributions over $\fclass, X$, respectively, and $\distrclass = \{\widetilde{\distr_X}^f\}_{f \in \fclass}$ be a class of distributions over $X$. Assume that $\distr_X$ can be efficiently sampled. Suppose there exists a reusable copy-protection scheme $(\cp^\mathcal{O}, \eval^\mathcal{O})$ in \carom{} for $\fclass$ with $\left(\distr_X, 1 - \negl(\secparam)\right)$-mean-correctness and $\left(\distr_\fclass, \distrclass, \delta\right)$-security. Then, for any noticeable function $\varepsilon$ there exists a copy-protection scheme $(\evalprime, \cpprime)$ in the plain model with $(\distr_X, 1 - \varepsilon(\secparam))$-mean-correctness and $\left(\distr_\fclass, \distrclass, \delta\right)$-security. 
\end{lemma}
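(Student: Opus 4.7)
The plan is to follow the construction sketched in the overview. Define $\cpprime(f)$ to lazily simulate a random oracle $\cO$ on-the-fly, compute $\rho_f \from \cp^\cO(f)$, sample $T$ uniformly from $\{0, 1, \ldots, M\}$ for a sufficiently large $M = \poly(\secparam, 1/\varepsilon)$, run $T$ test evaluations $\eval^\cO(\rho_f, x_i)$ on fresh $x_i \from \distr_X$, and record every query-answer pair encountered during $\cp^\cO$ and the test $\eval^\cO$ calls into a database $D$. Finally sample a long string $R$ of uniformly random oracle answers to be used for queries outside $D$, and output $(\rho_f^{(T)}, D, R)$, where $\rho_f^{(T)}$ is the state after the $T$ tests. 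The algorithm $\evalprime((\sigma, D, R), x)$ simply runs $\eval(\sigma, x)$ while answering each oracle query from $D$ when recorded, and from $R$ otherwise.

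For security, I would give a near-trivial reduction from any plain-model adversary $\abc$ to a CAROM adversary $\abco$: upon receiving $\rho_f \from \cp^\cO(f)$, $\alice^\cO$ performs the $T$ test evaluations using real queries to $\cO$ (so that $(\rho_f^{(T)}, D, R)$ is jointly distributed exactly as under $\cpprime$), runs $\alice$ on this input, and forwards the resulting split together with copies of $(D, R)$ to $\bob^\cO, \charlie^\cO$, who just run $\bob, \charlie$ while simulating the oracle locally via $(D, R)$. The classical outputs $y_B, y_C$ are what the challenger checks, so $\abco$ wins with exactly the same probability as $\abc$, contradicting the $(\distr_\fclass, \distrclass, \delta)$-security of $(\cp^\cO, \eval^\cO)$.

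Correctness is the main obstacle. For a fresh $x \from \distr_X$, I would compare the plain-model output $\evalprime((\rho_f^{(T)}, D, R), x)$ with the CAROM output $\eval^\cO(\rho_f^{(T)}, x)$ on the same state and input. These are identically distributed whenever every oracle query made during this $\eval$ call either already lies in $D$ (where both oracles return the same recorded value) or is genuinely fresh (where the uniform answer supplied by $R$ matches the random-oracle distribution). Hence the plain-model failure probability is bounded by $\negl(\secparam) + \pr{\eval^\cO \text{ makes a query not in } D}$, where the first term is the CAROM mean-correctness error on the evolved state $\rho_f^{(T)}$ supplied by reusability. It then remains to show that the second term is at most $\varepsilon - \negl(\secparam)$.

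This bound is where the random choice of $T$ does the work, and it is the delicate step. Viewing the whole experiment as $M+1$ sequential $\eval^\cO$ calls on iid fresh inputs, with the real call being step $T+1$ for uniform $T \in \{0, \ldots, M\}$, the probability that step $T+1$ introduces a new query is, by averaging, at most $\E[|D_{M+1}|]/(M+1)$, where $|D_{M+1}|$ is the total number of distinct oracle queries across all $M+1$ rounds; choosing $M$ a large enough polynomial in $\secparam$, $1/\varepsilon$ and the query complexity of $\eval$ drives this quantity below $\varepsilon$. The crucial reason to randomize $T$ rather than fix it is precisely the CAROM-specific subtlety flagged in the overview: every oracle query forces a classical measurement that can drastically alter $\rho_f$ in ways compatible with reusability (e.g.\ an internal counter that shifts the query distribution), and at any fixed $T$ the evolved state could be adversarially aligned so that its next query is always fresh, whereas a uniformly chosen window averages out any such drift. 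Combining the averaging bound with the $1-\negl$ reusability correctness of $\rho_f^{(T)}$ then yields the claimed $(\distr_X, 1 - \varepsilon)$-mean-correctness.
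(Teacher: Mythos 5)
Your high-level plan (lazy oracle simulation, a random number of test evaluations, a recorded database plus fresh random answers) matches the paper's, but there are two genuine gaps, and they are linked.

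First, your database $D$ records \emph{every} query made during $\cp^{\cO}$ as well as during the test evaluations. This breaks the security reduction: in the CAROM piracy game, $\alice^{\cO}$ receives only the state $\rho_f$ and never sees the oracle transcript of the challenger's internal run of $\cp^{\cO}$, so it cannot reproduce the portion of $D$ coming from $\cp$'s queries, and the joint distribution of $(\rho_f^{(T)},D,R)$ it hands to $\widetilde{\alice}$ is \emph{not} the one produced by $\cpprime$. Worse, shipping $\cp$'s query--answer pairs inside the copy-protected state can directly leak $d_f$ (e.g.\ $\cp$ may query the oracle on points derived from $d_f$), so the construction itself, not just the reduction, is unsound. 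The paper's construction deliberately records only the queries made by $\eval$ during the test runs --- exactly the queries the reduction adversary can make for itself --- and the whole purpose of the randomized stopping time is to control the residual bad set $D_{\cp}\setminus D_{\eval}$.

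Second, your averaging bound targets the wrong quantity and is quantitatively false. You bound the probability that round $T+1$ makes \emph{any} new query by $\E[|D_{M+1}|]/(M+1)$, but $|D_{M+1}|$ counts all distinct queries over all rounds and can grow linearly in $M$ (each round can make up to $N$ brand-new queries), so this ratio is bounded only by $N$ and cannot be driven below $\varepsilon$ by enlarging $M$. The argument only works because the dangerous new queries are those colliding with $D_{\cp}$, whose total number over \emph{all} rounds is at most $|D_{\cp}|\le M_{\cp}$, a quantity independent of the number of rounds; averaging then gives $M_{\cp}/T\le\varepsilon/2$. Genuinely fresh queries are harmless (a uniform answer from $R$ is perfectly distributed), so they must be excluded from the count --- once you fix the database to exclude $D_{\cp}$, the bad event and the $|D_{\cp}|/T$ bound fall into place as in the paper.
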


\noindent Combining~\Cref{lem:romtoplain} with~\Cref{thm:imp:approxcp}, we have the following. 

\begin{theorem}[Main Theorem] \label{thm:imp_main}
There exists an unlearnable class of circuits $\gcktclass = \gcktclass(\secparam)$ and an input distribution $\distr_X$ such that there does not exist a reusable copy-protection scheme in \carom{} with $\left(\distr_X, 1 - \negl(\secparam)\right)$-mean-correctness and $(\distr_\fclass,\allowbreak \distrclass, \allowbreak 1 - \varepsilon(\secparam))$-security for any noticeable function $\varepsilon$ and any distributions $\distr_\fclass, \distrclass$.
\end{theorem}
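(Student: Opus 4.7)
The overall plan is to derive the theorem from Lemma~\ref{lem:romtoplain} and Theorem~\ref{thm:imp:approxcp} by contradiction. Suppose, for the class $\gcktclass$ and distribution $\distr_X$ produced by Theorem~\ref{thm:imp:approxcp}, that a reusable CAROM scheme $(\cp^\cO, \eval^\cO)$ for $\gcktclass$ is simultaneously $(\distr_X, 1-\negl(\secparam))$-mean-correct and $(\distr_\gcktclass, \distrclass, 1 - \varepsilon_0(\secparam))$-secure for some noticeable $\varepsilon_0$ and some $\distr_\gcktclass, \distrclass$. Pick a noticeable $\varepsilon$ with $3\sqrt{\varepsilon} < \varepsilon_0$ (for instance $\varepsilon := \varepsilon_0^2/10$); invoking Lemma~\ref{lem:romtoplain} with this $\varepsilon$ yields a plain-model scheme $(\cpprime, \evalprime)$ for $\gcktclass$ that is $(\distr_X, 1 - \varepsilon)$-mean-correct and $(\distr_\gcktclass, \distrclass, 1 - \varepsilon_0)$-secure. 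Since $1 - \varepsilon_0 \leq 1 - 3\sqrt{\varepsilon}$, this contradicts Theorem~\ref{thm:imp:approxcp}, completing the argument modulo the lemma.

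The real work is to prove Lemma~\ref{lem:romtoplain}. Following Section~\ref{sec:overview:impossibility}, I would define $\cpprime(f)$ to lazily simulate a random oracle $\cO$: it runs $\rho_f \leftarrow \cp^\cO(f)$, samples $T \uniform [N]$ for a polynomial $N = N(\secparam, \varepsilon)$ to be fixed below, performs $T$ reusability evaluations $\rho_f \leftarrow \eval^\cO(\rho_f, x_i)$ on fresh $x_i \leftarrow \distr_X$, and records every oracle query made by $\eval$ during the tests into a database $D$. It also samples a table $R$ of truly random answers for positions outside $D$ and outputs $(\rho_f, D, R)$. The evaluator $\evalprime((\rho, D, R), x)$ builds an oracle $\cO'$ that agrees with $D$ on recorded queries and uses $R$ on fresh ones, and runs $\eval^{\cO'}(\rho, x)$.

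Security reduces to CAROM security by a direct simulation. Given a QPT plain-model adversary $\abc$, construct a CAROM adversary $\abctilde$ as follows: $\alicetilde$ receives $\rho_f \leftarrow \cp^\cO(f)$ along with oracle access to $\cO$ from the CAROM challenger, uses this access to replicate $\cpprime$'s post-processing (sampling $T$, running the test evaluations, fixing $(D, R)$), and then runs $\alice$ on $(\rho_f, D, R)$; she sends the split quantum registers together with classical copies of $(D, R)$ to $\bobtilde, \charlietilde$, who simulate $\cO'$ locally from $(D, R)$ and execute $\bob, \charlie$. Since $(D, R)$ is committed before the split and no further interaction with the CAROM oracle $\cO$ is required, the no-communication constraint between $\bobtilde$ and $\charlietilde$ is preserved, and $\abctilde$ wins the CAROM piracy experiment with exactly the probability $\abc$ wins the plain-model one.

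The main obstacle is correctness. The oracle $\cO'$ agrees with the lazily-simulated $\cO$ on every position in $D$ and both answer uniformly on the rest, so the only way $\eval^{\cO'}(\rho_f^{T+1}, x)$ can diverge in distribution from a hypothetical $\eval^\cO(\rho_f^{T+1}, x)$ using the same $\cO$ throughout is if the final evaluation queries a position touched during $\cp$ or a prior test yet not logged in $D$. Controlling this ``bad event'' is the crux, and it is genuinely quantum: each classical-accessible query forces a measurement on $\rho_f$, so $\rho_f^{T+1}$ and the query distribution it induces on the next $\eval$ call can vary drastically with $T$, and a fixed $T$ could be adversarially aligned with a state where the next evaluation makes many previously-unseen queries. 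I would handle this by an averaging argument: letting $p_i$ be the probability that the $i$-th evaluation introduces a new oracle position, a potential argument on the growth of $|D|$ (bounded by $Q \cdot N$ for per-call query complexity $Q = \poly(\secparam)$) controls the number of indices $i \in [N]$ where $p_i$ is large, so choosing $N$ to be a sufficiently large polynomial in $\secparam$ and $1/\varepsilon$ forces $\E_{T \uniform [N]}[p_{T+1}] \leq \varepsilon$. Combined with reusability of the CAROM scheme applied to $\rho_f^{T+1}$ (valid because $T \leq N = \poly(\secparam)$), this yields the required $(1-\varepsilon)$-mean-correctness. The delicate step will be making the averaging rigorous under the forced intermediate measurements induced by classical-accessible queries~--- exactly why the construction randomizes $T$ rather than fixing a worst-case test count.
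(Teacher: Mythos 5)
Your proposal takes essentially the same route as the paper: the theorem itself is the one-line contradiction you give (combine \Cref{lem:romtoplain} with \Cref{thm:imp:approxcp}, choosing $\varepsilon$ so that $1-\varepsilon_0 \le 1-3\sqrt{\varepsilon}$), and your construction of $(\cpprime,\evalprime)$ --- lazily simulated oracle, a \emph{uniformly random} number of test executions, recorded database $D$ plus fresh answers $R$, and the security reduction in which the CAROM adversary performs the test phase itself before splitting --- matches the paper's proof of \Cref{lem:romtoplain} almost verbatim.

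The one place your sketch would fail as written is the averaging step. You propose a potential argument ``on the growth of $|D|$, bounded by $Q\cdot N$'': since $|D|$ grows by up to $Q$ at \emph{every} test execution whether or not anything bad happens, this potential only yields $\E_T[p_{T+1}]\le Q$, which is vacuous. The correct potential is not $|D|$ but $\bigl|D_{\mathsf{CP}}\cap\bigcup_{j\le i}D^j_{\mathsf{Eval}}\bigr|$, i.e.\ the number of \emph{copy-protection-time} queries discovered so far. This is bounded by $M$, the query complexity of $\cp$, and increases by at least one whenever a test execution queries a position in $D_{\mathsf{CP}}$ not yet recorded; hence $\sum_{i=1}^{T} p_i \le M$ and $\Pr[\neg E]\le M/T$, so $T=\lceil 2M/\varepsilon\rceil$ suffices (this is also why the paper's bound improves on \cite{CKP15} by a factor of $N$). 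Note also that the only determined-but-unrecorded positions are those in $D_{\mathsf{CP}}\setminus D_{\mathsf{Eval}}$ --- every query of a prior test execution \emph{is} logged, so that half of your stated bad event is vacuous; fresh positions never touched before are answered uniformly by both $\cO$ and $\cO'$ and cause no divergence. With the potential corrected, your argument is the paper's.
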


\begin{proof}[Proof of \Cref{lem:romtoplain}]
Let\footnote{In \cite{CKP15}, the number of test executions ($T$) has an extra factor of $N$. Our modified analysis could also be applied to their construction to get rid of the factor of $N$.} $(\cp^\mathcal{O}, \eval^\mathcal{O})$ be given as in the lemma statement. Since $\cp^\cO$ and $\eval^\cO$ both run in polynomial time, the numbers of queries they make to $\cO$ are bounded by some $M = \poly(\secparam)$ and $N = \poly(\secparam)$, respectively. We will construct a valid copy-protection scheme $(\cpprime, \evalprime)$ for $\fclass$ in the plain model. Let $T := \lceil\frac{2M}{\varepsilon}\rceil$. \\ 

\noindent \textbf{Copy Protection:} $\cpprime(1^\secparam, d_f)$ takes as input the security parameter $1^\secparam$ and the classical description $d_f$ of a function $f \in \fclass$, and it does the following: \begin{enumerate}
    \item Simulate \textit{on-the-fly} a classical random oracle $\mathcal{O}$ to be used in any of the steps below.
    \item \label{step2} Run $\rho_f \leftarrow \cp^\mathcal{O}(1^\secparam, d_f)$.
    
    \item \label{step3} Set $\rho_f^1 := \rho_f$. Pick $S \uniform \{0,1,\dots,T-1\}$. For $i = 1$ to $S$: \begin{enumerate}
        \item Sample $x_i \leftarrow \distr_X$. 
        \item Compute $\rho_f^{i+1} \otimes \ket{y_i}\bra{y_i} \leftarrow \eval^\mathcal{O}(1^\secparam, \rho_f^i, x_i)$ and record the queries made by $\eval$ with their answers in a database $D_\eval^i$. 
    \end{enumerate}
    \item Sample random oracle answers $r_1, r_2, \dots, r_N \uniform Y$. These will be used by $\evalprime$ to answer queries not recorded in any of the databases $D_\eval^i$.
    \item Output the state $\widetilde{\rho_f}:= \rho_f^{S+1} \otimes \ket{D_\eval}\bra{D_\eval} \bigotimes_{j=1}^N \ket{r_j}\bra{r_j}$, where $D_\eval := \bigcup_{i=1}^S D_\eval^i$.
\end{enumerate}

\noindent \textbf{Evaluation:} $\evalprime(1^\secparam, \widetilde{\rho_f}, x)$ takes as input the security parameter $1^\secparam$, a copy-protected state $\widetilde{\rho}$, and an input $x \in X$, and it does the following: \begin{enumerate}
    \item Parse the state $\widetilde{\rho_f}$ as $\rho_f \otimes \ket{D}\bra{D} \bigotimes_{j=1}^N \ket{r_j}\bra{r_j}$ by measuring the registers corresponding to the database $D$ and the oracle answers $\{r_j\}_{j=1}^N$.
    \item Run $\rho_f' \otimes \ket{y}\bra{y} \leftarrow \eval^\mathcal{O}(1^\secparam, \rho_f, x)$, answering the oracle queries of $\eval$ as follows: to answer the $j$th query made by $\eval$, answer consistently if the query is in $D$, and answer using $r_j$ if the query is not in $D$ (without loss of generality $\eval$ does not make repeated queries). Output the state $\left( \rho_f' \otimes \ket{D}\bra{D} \bigotimes_{j=1}^N \ket{r_j}\bra{r_j} \right) \otimes \ket{y}\bra{y}$.
\end{enumerate}

\paragraph{Approximate Correctness:} Fix $f \in \fclass$. By $(\distr_X, 1 - \negl(\secparam))$-mean-correctness and reusability of $(\cp^\cO, \eval^\cO)$, we have \begin{align} \label{eq:reusable}
      \prob \bracketsSquare{ f(x_{S+1}) \from \eval^\cO(1^\secparam, \rho_f^{S+1}, x_{S+1}) \; : \; \substack{ \rho_f^{1} \from \cp(1^\secparam, d_f) \\ x_i \from \distr_X, \; 1 \le i \le S+1 \\ \rho_f^{i+1} \from \eval^\cO(1^\secparam, \rho_f^{i}, x_i), \; 1 \le i \le S }} \ge 1 - \negl(\secparam).
 \end{align}
 
 \newcommand{\event}{E}

\noindent Recall that $\evalprime$ emulates $\eval^\cO$, using the database $D = D_\eval$ to answer oracle queries and using the independent random answers $r_j$ when the query is not in $D$. This is equivalent to running $\eval^{\cO_D}(1^\secparam, \rho_f, x)$, where $\cO_D$ is defined as a random oracle conditioned to be consistent with $D$. Let $D_\cp$ be the set of queries made by $\cp$ in step \ref{step2} of $\cpprime$. Let $\event$ be the event that $\eval(1^\secparam, \rho_f, x)$ does not make any oracle query in $D_\cp \setminus D_\eval$ during the execution of $\evalprime$.\footnote{Here by a slight abuse of notation we only consider the input $x \in X$ of the query and not the answer $y \in Y$.} Keep in mind that $\evalprime$ emulates $\eval^\cO$ at the time of the $(S+1)$st test execution. Hence, $\event$ is equivalently the event that $\eval(1^\secparam, \rho_f^{S+1}, x_{S+1})$ makes no query in $D_\cp \setminus D_\eval$, where $\rho_f^{S+1}$ and $x_{S+1}$ are as in \cref{eq:reusable}. Conditioned on $\event$, the emulation of $\evalprime$ is flawless, i.e. $\eval^\cO(1^\secparam, \rho_f, x)$ and $\eval^{\cO_D}(1^\secparam, \rho_f, x)$ are perfectly indistinguishable. Hence, we can lower-bound the correctness of $\evalprime$ by lower-bounding the probability of $\event$:
\begin{align}
 &\pr{ f(x) \leftarrow \evalprime(1^\secparam, \widetilde{\rho_f},x) \ :\ \substack{\widetilde{\rho_f} \leftarrow \cpprime(1^{\secparam},d_f)\\ \ {x \leftarrow \distr_X}} } \nonumber \\
 &= 1 - \pr{ y \ne f(x) \ :\ \substack{\widetilde{\rho_f} \leftarrow \cpprime(1^{\secparam},d_f)\\ \ {x \leftarrow \distr_X} \\ y \leftarrow \evalprime(1^\secparam, \widetilde{\rho_f}, x)} } \nonumber \\
 &= 1 - \pr{ y \ne f(x_{S+1}) \ :\ \substack{\rho_f \leftarrow \cp^\cO(1^{\secparam},d_f)\\
 S \uniform \{0,1,\dots,T-1\} \\
 x_i \from \distr_X, \; 1 \le i \le S+1 \\ \rho_f^{i+1} \from \eval^\cO(1^\secparam, \rho_f^{i}, x_i), \; 1 \le i \le S \\ y \leftarrow \eval^{\cO_D}(1^\secparam, \rho_f^{S+1},x_{S+1})} } \nonumber \\
 &\ge 1 - \left( \pr{\event} \cdot \pr{ y \ne f(x_{S+1}) \ :\ \substack{\rho_f \leftarrow \cp^\cO(1^{\secparam},d_f)\\
 S \uniform \{0,1,\dots,T-1\} \\
 x_i \from \distr_X, \; 1 \le i \le S+1 \\ \rho_f^{i+1} \from \eval^\cO(1^\secparam, \rho_f^{i}, x_i), \; 1 \le i \le S \\ y \leftarrow \eval^{\cO_D}(1^\secparam, \rho_f^{S+1},x_{S+1})} \; \bigg\rvert \; \event} + \pr{\neg \event} \right) \nonumber \\ \label{eq:lowerbound}
 &\overset{(\ref{eq:reusable})}{\ge} 1 - \pr{\neg \event} - \negl(\secparam).
\end{align}
\noindent Therefore, the following claim will suffice for the proof together with \cref{eq:lowerbound}:
\begin{claim}
\label{clm:eventbound}
$ \pr{\neg \event} \le \frac{\varepsilon(\secparam)}{2}. $
\end{claim}
\begin{proof} Consider the following experiment, which consists of $T$ executions of $\eval^\cO$ with random inputs: \begin{enumerate}
    \item Let $\cO$ be a random oracle. 
    \item Compute $\rho_f \leftarrow \cp^\mathcal{O}(1^\secparam, d_f)$.
    
    \item Set $\rho_f^1 := \rho_f$. For $i = 1$ to $T$: \begin{enumerate}
        \item Sample $x_i \leftarrow \distr_X$.
        \item Compute $\rho_f^{i+1} \otimes \ket{y_i}\bra{y_i} \leftarrow \eval^\mathcal{O}(1^\secparam, \rho_f^i, x_i)$. Let $D_\eval^i$ be the set of inputs queried by $\eval$ in this step. 
    \end{enumerate}
\end{enumerate}

\noindent Let $w_i = \left| \brackets{D_\eval^i \cap D_\cp} \setminus \bigcup_{1 \le j < i} D_\eval^j\right|$ be the random variable corresponding to the number of \emph{new} queries from $D_\cp$ made by $\eval^\cO$ at the $i$th step above. Let $p_i = \pr{w_i \ge 1}$ be the probability that a new query is made at the $i$th step. Recall that $\evalprime$ chooses the number of test executions uniformly at random, hence $\pr{\neg \event} = \frac{1}{T} \sum_{i=1}^T p_i$ is the probability that $w_i \ge 1$ for a random $i \in [T]$. Now, by linearity of expectation we have \begin{align*}
    \sum_{i=1}^T p_i \le \sum_{i=1}^T \E\bracketsSquare{w_i} =  \E\bracketsSquare{\sum_{i=1}^T w_i } = \E\bracketsSquare{\bigg| D_\cp \cap \brackets{\bigcup_{j=1}^T D_\eval^j} \bigg|} \le |D_\cp| \le M.
\end{align*} Therefore, $\pr{\neg E} = \frac{1}{T} \sum_{i=1}^T p_i \le \frac{M}{T} \le \frac{\varepsilon}{2}$ as desired.
\end{proof}

\paragraph{Copy-Protection Security:} Suppose there is an adversary $\abctilde$ which succeeds in the pirating experiment for $(\cpprime, \evalprime)$ with probability $\varepsilon$, i.e. \[ \prob \left[ b = 1\ :\ b \leftarrow \pirateexp{\cpprime}{\evalprime}{\distr_\fclass}{\distrclass}\left(1^\secparam, \abctilde\right)  \right] = \varepsilon. \] Using $\abctilde$, we will construct an adversary $\abco$ which succeeds in the pirating experiment for $(\cp, \eval)$ with probability $\varepsilon$. This will immediately imply $\delta$-security of $(\cpprime, \evalprime)$ by the $\delta$-security of $(\cp, \eval)$. \\

\noindent  We set $\bobo$ and $\charlieo$ to be identical to $\bobtilde$ and $\charlietilde$, respectively, so that they do not make any queries. We define $\aliceo$, which has oracle access to $\mathcal{O}$, as follows: \begin{enumerate}
    \item Given a copy-protected program $\rho_f =: \rho_f^1$, pick $S \uniform \{0,1,\dots,T-1\}$ and repeat for $i=1$ to $S$: \begin{enumerate}
        \item Sample $x_i \leftarrow \distr_X$
        \item Run $\rho_f^{i+1} \otimes \ket{y_i}\bra{y_i} \leftarrow \eval^\mathcal{O}(1^\secparam, \rho_f^i, x_i)$, forwarding the oracle queries of $\eval$ to $\mathcal{O}$ and recording them with their answers in a database $D_\eval^i$.
    \end{enumerate}
    \item Sample random oracle answers $r_1, r_2, \dots, r_N \uniform Y$.
    \item Set $\widetilde{\rho_f}:= \rho_f^{S+1} \otimes \ket{D_\eval}\bra{D_\eval} \bigotimes_{j=1}^N \ket{r_j}\bra{r_j}$, where $D_\eval := \bigcup_{i=1}^S D_\eval^i$.
    \item Run $\alicetilde$ on the state $\widetilde{\rho_f}$.
\end{enumerate}

By construction, the bipartite state received by $\bobo$ and $\charlieo$ is identical to that received by $\bob$ and $\charlie$. This is because the on-the-fly simulation done by $\cpprime$ is perfectly indistinguishable from the real oracle answers provided by $\mathcal{O}$. Therefore, \begin{align*} &\prob \left[ b = 1\ :\ b \leftarrow \pirateexp{\cp}{\eval}{\distr_\fclass}{\distrclass}\left(1^\secparam, \abc\right)  \right] \\ = &\prob \left[ b = 1\ :\ b \leftarrow \pirateexp{\cpprime}{\evalprime}{\distr_\fclass}{\distrclass}\left(1^\secparam, \abctilde\right)  \right] = \varepsilon \end{align*} as desired.

\end{proof}

\submversion{\bibliographystyle{plain}}
\fullversion{\bibliographystyle{alpha}}
\bibliography{crypto}

\newpage

\end{document}